\crefname{algocf}{alg.}{algs.}
\Crefname{algocf}{Algorithm}{Algorithms}
\newtheorem{theorem}{Theorem}[section]
\newtheorem{lemma}[theorem]{Lemma}
\newtheorem{definition}[theorem]{Definition}
\newcommand{\cB}{\mathcal{B}}
\newcommand{\bP}{\mathbb{P}}
\newcommand{\E}{\mathbb{E}}
\newcommand{\cE}{\mathcal{E}}
\newcommand{\vtheta}{\boldsymbol{\theta}}
\newcommand{\vomega}{\boldsymbol{\omega}}
\newcommand{\vl}{\boldsymbol{\ell}}
\newcommand{\vx}{\mathbf{x}}
\newcommand{\cA}{\mathcal{A}}
\newcommand{\cZ}{\mathcal{Z}}
\newcommand{\cX}{\mathcal{X}}
\newcommand{\cP}{\mathcal{P}}
\newcommand{\cF}{\mathcal{F}}
\newcommand{\vz}{\mathbf{z}}
\newcommand{\vb}{\mathbf{b}}
\newcommand{\vw}{\mathbf{w}}
\newcommand{\vp}{\mathbf{p}}
\newcommand{\vq}{\mathbf{q}}
\newcommand{\cW}{\mathcal{W}}
\NewDocumentCommand{\pv}{m e{_} m}{%
  #1\IfValueT{#2}{_{#2}}^{(#3)}%
}
\newcommand{\tv}{\mathrm{TV}}
\newcommand{\olt}{\mathrm{OLT}}
\newcommand{\alg}{\mathrm{ALG}}
\newcommand{\var}{\mathrm{Var}}
\newcommand{\unif}{\mathrm{Unif}}
\newcommand{\xhdr}[1]{\vspace{1mm} \noindent{\bf #1}}
\title{Regret Minimization in Stackelberg\\ Games with Side Information}
\author{%
  Keegan Harris\\
  School of Computer Science\\
  Carnegie Mellon University\\
  Pittsburgh, PA 15213 \\
  \texttt{keeganh@cs.cmu.edu} \\
  \And
  Zhiwei Steven Wu \\
  School of Computer Science\\
  Carnegie Mellon University\\
  Pittsburgh, PA 15213 \\
  \texttt{zhiweiw@cs.cmu.edu} \\
  \And
  Maria-Florina Balcan \\
  School of Computer Science\\
  Carnegie Mellon University\\
  Pittsburgh, PA 15213 \\
  \texttt{ninamf@cs.cmu.edu} \\
}
\begin{document}

\maketitle

\begin{abstract}
    %
%
Algorithms for playing in Stackelberg games have been deployed in real-world domains including airport security, anti-poaching efforts, and cyber-crime prevention.
However, these algorithms often fail to take into consideration the additional information available to each player (e.g. traffic patterns, weather conditions, network congestion), which may significantly affect both players' optimal strategies.
We formalize such settings as \emph{Stackelberg games with side information}, in which both players observe an external \emph{context} before playing.
The leader commits to a (context-dependent) strategy, and the follower best-responds to both the leader's strategy and the context. 
We focus on the online setting in which a sequence of followers arrive over time, and the context may change from round-to-round.
In sharp contrast to the non-contextual version, we show that it is impossible for the leader to achieve no-regret in the full adversarial setting.  
Motivated by this result, we show that no-regret learning is possible in two natural relaxations: the setting in which the sequence of followers is chosen stochastically and the sequence of contexts is adversarial, and the setting in which contexts are stochastic and follower types are adversarial.\looseness-1
\end{abstract}

\section{Introduction}
%
%
A \emph{Stackelberg game} \cite{von2010market, von2010leadership} is a strategic interaction between two utility-maximizing players in which one player (the \emph{leader}) is able to \emph{commit} to a (possibly mixed) strategy before the other player (the \emph{follower}) takes an action. 
While Stackelberg's original formulation was used to model economic competition between firms, Stackelberg games have been used to study a wide range of topics in computing ranging from incentives in algorithmic decision-making~\cite{hardt2016strategic} to radio spectrum utilization~\cite{zhang2009stackelberg}. 
Perhaps the most successful application of Stackelberg games to solve real-world problems is in the field of security, where the analysis of \emph{Stackelberg security games} has led to new methods in domains such as passenger screening at airports~\cite{brown2016one}, wildlife protection efforts in conservation areas~\cite{fang2015security}, the deployment of Federal Air Marshals on board commercial flights~\cite{jain2010software}, and patrol boat schedules for the United States Coast Guard~\cite{an2013deployed}.\footnote{See~\cite{sinha2018stackelberg, kar2017trends, an2017stackelberg} for an overview of other application domains for Stackelberg security games.} 

However in many real-world settings which are typically modeled as Stackelberg games, the payoffs of the players often depend on additional \emph{contextual information} which is not captured by the Stackelberg game framework. 
For example, in airport security the severity of an attack (as well as the ``benefit'' of a successful attack to the attacker) depends on factors such as the arrival and departure city of a flight, the whether there are VIP passengers on board, and the amount of valuable cargo on the aircraft. 
Additionally, there may be information in the time leading up to the attack attempt which may help the security service determine the type of attack which is coming~\cite{Iifc_2022}. 
For instance, in wildlife protection settings factors such as the weather and time of year may make certain species of wildlife easier or harder to defend from poaching, and information such as the location of tire tracks may provide context about which animals are being targeted. 
As a result, the optimal strategy of both the leader and the follower may change significantly depending on the side information available. 

\xhdr{Overview of our results.}
In order to capture the additional information that the leader and follower may have at their disposal, we formalize such settings as \emph{Stackelberg games with side information}. 
Specifically, we consider a setting in which a leader interacts with a sequence of followers in an online fashion. 
At each time-step, the leader gets to see payoff-relevant information about the current round in the form of a \emph{context}. 
After observing the context, the leader commits to a mixed strategy over a finite set of actions, and the follower best-responds to both (1) the leader's strategy and (2) the context in order to maximize their utility. 
We allow the follower in each round to be one of $K$ \emph{types}. 
Each follower type corresponds to a different mapping from contexts, leader strategies, and follower actions to utilities. 
While the leader may observe the context before committing to their mixed strategy, they do not get to observe the follower's type until after the round is over. 
Under this setting, the goal of the leader is to minimize their \emph{regret} with respect to the best \emph{policy} (i.e. the best mapping from contexts to mixed strategies) in hindsight, with respect to the realized sequence of followers and contexts.\looseness-1 

We show that in the fully adversarial setting (i.e. the setting in which both the sequence of contexts and follower types is chosen by an adversary), no-regret learning is not possible, even when the policy class is highly structured. 
We show this via a reduction from the problem of online linear thresholding, for which it is known that no no-regret learning algorithm exists. 
Motivated by this impossibility result, we study two natural relaxations: (1) a setting in which the sequence of contexts is chosen by an adversary and the sequence of follower types is chosen stochastically, and (2) a setting in which the sequence of contexts is chosen stochastically and the sequence of follower types is chosen by an adversary.\looseness-1 
%
%

In the stochastic follower setting we show that the greedy algorithm (\Cref{alg:adv_contexts}), which estimates the leader's expected utility for the given context and plays the mixed strategy which maximizes their estimate, achieves no-regret as long as the total variation distance between their estimate and the true distribution is decreasing with time. 
We then show how to instantiate the leader's estimation procedure so that the regret of~\Cref{alg:adv_contexts} is $O(\min\{K, A_f\}\sqrt{T \log(T)})$, where $T$ is the time horizon, $K$ is the number of follower types, and $A_f$ is the number of follower actions. 
In the stochastic context setting, we show the leader can obtain $O(\sqrt{K T \log(T)} + K)$ regret by playing Hedge over a finite set of policies (\Cref{alg:adv_followers}). 
An important intermediate result in both settings is that it is (nearly) without loss of generality to consider leader policies which map to a finite set of mixed strategies $\cE_{\vz}$, given context $\vz$.\footnote{Specifically, a leader who is restricted to playing such policies incurs negligible additional regret.\looseness-1}\looseness-1 

Next, we extend our algorithms for both types of adversary to the setting in which the leader does not get to observe the follower's type after each round, but instead only gets to observe their action. 
We refer to this type of feedback as \emph{bandit feedback}. 
Both of our extensions to bandit feedback make use of the notion of a \emph{barycentric spanner}~\cite{awerbuch2008online}, a special basis under which bounded loss estimators may be obtained for all leader mixed strategies. 
In the bandit stochastic follower setting, we use the fact that in addition to being bounded, a loss estimator constructed using a barycentric spanner has low variance, in order to show that a natural extension of our greedy algorithm obtains $\Tilde{O}(T^{2/3})$ regret.
We also make use of barycentric spanners in the (bandit) stochastic context setting, albeit in a different way. 
Here, our algorithm proceeds by splitting the time horizon into blocks, and using a barycentric spanner to estimate the leader's utility from playing according to a set of special policies in each block. 
We then play Hedge over these policies to obtain $\Tilde{O}(T^{2/3})$ leader regret.\footnote{This is similar to how~\citet{balcan2015commitment} use barycentric spanners to obtain no-regret in the non-contextual setting, although more care must be taken to handle the side information present in our setting.} 
See~\Cref{tab:results} for a summary of our results.\looseness-1
\renewcommand{\arraystretch}{2.0}
\begin{table}
    \setlength{\belowcaptionskip}{-15pt}
    \centering
    \begin{tabular}{c|c|c} 
          & Full Feedback & Bandit Feedback\\ \hline
         Fully Adversarial & \makecell{$\Omega(T)$ \\(\Cref{sec:impossibility})} & \makecell{$\Omega(T)$ \\(\Cref{sec:impossibility})} \\ \hline 
         \makecell{Stochastic Followers,\\ Adversarial Contexts} & \makecell{$O \left(\min\{K, A_f\}\sqrt{T \log T} \right)$ \\(\Cref{sec:followers})} & \makecell{$O\left(K^{2/3} A_f^{2/3} T^{2/3} \log^{1/3} T \right)$ \\ (\Cref{sec:bandit-follower})} \\ \hline 
         \makecell{Stochastic Contexts, \\ Adversarial Followers} & 
         \makecell{$O \left(\sqrt{K T \log T} + K \right)$ \\(\Cref{sec:contexts})} & \makecell{$O \left(K A_f^{1/3} T^{2/3} \log^{1/3} T \right)$ \\ (\Cref{sec:bandit-context})} \\ 
    \end{tabular}
    \caption{Summary of our results. Under bandit feedback, we consider a relaxed setting in which only the leader's utility depends on the side information.}
    \label{tab:results}
\end{table}
%

\xhdr{Related work.}
%
%
%
%
\citet{letchford2009learning} consider the problem of learning the leader's optimal mixed strategy in the repeated Stackelberg game setting against a perfectly rational follower with an unknown payoff matrix. 
\citet{peng2019learning} study the same setting as~\citet{letchford2009learning}. 
They provide improved rates and prove nearly-matching lower bounds. 
%
%
Learning algorithms to recover the leader's optimal mixed strategy have also been studied in Stackelberg security games \citep{balcan2015commitment, blum2014learning, sinha2015learning, bernasconi2023optimal}.

Our work builds off of several results established for online learning in (non-contextual) Stackelberg games in~\citet{balcan2015commitment}. 
In particular, our results in~\Cref{sec:contexts} and~\Cref{sec:bandit-context} may be viewed as a generalization of their results to the setting in which the payoffs of both players depend on an external context. 
Roughly speaking,~\citet{balcan2015commitment} show that it without loss to play Hedge over a finite set of \emph{mixed strategies} in order to obtain no-regret against an adversarially-chosen sequence of follower types.
In order to handle the additional side information available in our setting, we instead play Hedge over a finite set of \emph{policies}, each of which map to a finite set of (context-dependent) mixed strategies. 
However, the discretization argument is more nuanced in our setting. 
In particular, it is not without loss of generality to consider a finite set of policies. 
As a result, we need to bound the additional regret incurred by the leader due to the discretization. 
More recent work on learning in Stackelberg games provides improved regret rates in the full feedback~\cite{daskalakis2016learning} and bandit feedback~\cite{bernasconi2023optimal} settings, and considers the effects of non-myopic followers~\cite{haghtalab2022learning} and followers who respond to calibrated predictions~\cite{haghtalab2023calibrated}.\looseness-1
%
%

\citet{lauffer2023no} study a Stackelberg game setting in which there is an underlying (probabilistic) state space which affects the leader's rewards, and there is a single (unknown) follower type. 
In contrast, we study a setting in which the sequence of follower types and/or contexts may be chosen adversarially. 
%
%
\citet{sessa2020contextual} study a repeated game setting in which the players receive additional information (i.e. a context) at each round, much like in our setting. 
However, their focus is on repeated normal-form games, which require different tools and techniques to analyze compared to the repeated Stackelberg game setting we consider. 
Other work has also considered repeated normal-form games which change over time in different ways. 
In particular,~\citet{zhang2022no, anagnostides2023convergence} study learning dynamics in time-varying game settings, and~\citet{harris2022meta} study a meta-learning setting in which the game being played changes after a fixed number of rounds.\looseness-1
%

%
Finally, our problem may be viewed is a special case of the contextual bandit setting with adversarially-chosen utilities~\cite{syrgkanis2016efficient, syrgkanis2016improved, rakhlin2016bistro}, where the learner gets to observe ``extra information'' in the form of the follower's type (\Cref{sec:full}) or the follower's action (\Cref{sec:bandit}).
However, there is much to gain from taking advantage of the additional information and structure that is present in our setting. 
Besides having generally worse regret rates, another reason not to use off-the-shelf adversarial contextual bandit algorithms in our setting is that they typically require either (1) the learner to know the set of contexts they will face beforehand (the \emph{transductive} setting; \citet{syrgkanis2016efficient, syrgkanis2016improved, rakhlin2016bistro}) or (2) for there to exist a small set of contexts such that any two policies behave differently on at least one context in the set (the \emph{small separator} setting; \citet{syrgkanis2016efficient}). 
We require no such assumptions.\looseness-1

\section{Setting and background}\label{sec:setting}
\xhdr{Notation.} We use $[N] := \{1, \ldots, N\}$ to denote the set of natural numbers up to and including $N \in \mathbb{N}$ and cl$(\cP)$ to denote the closure of the set $\cP$.
$\vx[a]$ denotes the $a$-th component of vector $\vx$, and $\Delta(\cA)$ denotes the probability simplex over the set $\cA$. 
%
%
$\tv(\vp,\vq) = \frac{1}{2} \int |p(x) - q(x)| dx$ is the total variation distance between distributions $\vp$ and $\vq$, and $\E_t[x] = \E[x | \mathcal{F}_{t}]$ is shorthand for the expected value of the random variable $x$, conditioned on the filtration up to (but not including) time $t$. 
All proofs may be found in the Appendix. 
Finally, while we present our results for general Stackelberg games with side information, our results are readily applicable to the special case of Stackelberg \emph{security} games with side information. 

\xhdr{Our setting.} We consider a game between a leader and a sequence of followers. 
In each round $t \in [T]$, Nature reveals a context $\vz_t \in \cZ \subseteq \mathbb{R}^d$ to both players.\footnote{E.g. in airport security, $\vz_t$ may contain information about arrival and departure times, number of passengers, valuable cargo, etc. In cyber-defense, $\vz_t$ may be a list of network traffic statistics.\looseness-1} 
The leader moves first by playing some mixed strategy $\vx_t \in \cX \subseteq \mathbb{R}^A$ over a set of (finite) leader actions $\cA$, i.e., $\vx_t \in \Delta(\cA)$. 
The size of $\cA$ is $A := |\cA|$. 
Having observed the leader's mixed strategy, the follower \emph{best-responds} to both $\vx_t$ and $\vz_t$ by playing some action $a_f \in \cA_f$, where $\cA_f$ is the (finite) set of follower actions and $A_f := |\cA_f|$.\looseness-1 
\begin{definition}[Follower Best-Response]\label{def:best-response}
Follower $f$'s best-response to context $\vz$ and mixed strategy $\vx$ is $b_{f}(\vz, \vx) \in \arg \max_{a_f \in \cA_f} \sum_{a_l \in \cA} \vx[a_l] \cdot u_{f}(\vz, a_l, a_f)$, 
%
%
where $u_f : \cZ \times \cA \times \cA_f \rightarrow [0, 1]$ is follower $f$'s utility function. 
In the case of ties, we assume that there is a fixed and known ordering over actions which determines how the follower best-responds, i.e. if $a > a'$ for $a, a' \in \cA_f$ then the follower will break ties between $a$ and $a'$ in favor of $a$.\footnote{It is without loss of generality to assume that the follower's best-response is a pure strategy.}
\end{definition}
We allow for the follower in round $t$ (denoted by $f_t$) to be one of $K \geq 1$ \emph{follower types} $\{\alpha^{(1)}, \ldots, \alpha^{(K)}\}$ (where $K \leq T$). 
Follower type $\alpha^{(i)}$ is characterized by utility function $u_{\alpha^{(i)}} : \cZ \times \cA \times \cA_f \rightarrow [0, 1]$, i.e. given a context $\vz$, leader action $a_l$, and follower action $a_f$, a follower of type $\alpha^{(i)}$ would receive utility $u_{\alpha^{(i)}}(\vz, a_l, a_f)$.
We assume that the set of all possible follower types and their utility functions are known to the leader, but that the follower's type at round $t$ is not revealed to the leader until \emph{after} the round is over. 
We denote the leader's utility function by $u : \cZ \times \cA \times \cA_f \rightarrow [0,1]$ and assume it is known to the leader. 
We often use the shorthand $u(\vz, \vx, a_f) = \sum_{a_l \in \cA} \vx[a_l] \cdot u(\vz, a_l, a_f)$ to denote the leader's expected utility of playing mixed strategy $\vx$ under context $\vz$ against follower action $a_f$. 
Follower $f_t$'s expected utility $u_{f_t}(\vz, \vx, a_f)$ is defined analogously.\looseness-1 

A leader \emph{policy} $\pi : \cZ \rightarrow \cX$ is a (possibly random) mapping from contexts to mixed strategies. 
If the leader using policy $\pi_t$ in round $t$ and observes context $\vz_t$, their strategy $\vx_t$ is given by $\vx_t \sim \pi_t(\vz_t)$.\looseness-1 
\begin{definition}[Optimal Policy]
Given a sequence of followers $f_1, \ldots, f_T$ and contexts $\vz_1, \ldots, \vz_T$, the strategy given by the leader's optimal-in-hindsight policy for context $\vz$ is $\pi^*(\vz) \in \arg\max_{\vx \in \cX} \sum_{t=1}^T u(\vz, \vx, b_{f_t}(\vz, \vx)) \cdot \mathbbm{1}\{\vz_t = \vz\}$. 
%
%
\end{definition}
We measure the leader's performance against the optimal policy via the notion of \emph{contextual Stackelberg regret} (\emph{regret} for short). 
\begin{definition}[Contextual Stackelberg Regret]\label{def:regret1}
    Given a sequence of followers $f_1, \ldots, f_T$ and contexts $\vz_1, \ldots, \vz_T$, the leader's contextual Stackelberg regret is $R(T) := \sum_{t=1}^T u(\vz_t, \pi^*(\vz_t), b_{f_t}(\vz_t, \pi^*(\vz_t))) - u(\vz_t, \vx_t, b_{f_t}(\vz_t, \vx_t))$, 
    where $\vx_1, \ldots, \vx_T$ is the sequence of mixed strategies played by the leader.
\end{definition}
If an algorithm achieves regret $R(T) = o(T)$, we say that it is a \emph{no-regret} algorithm. 
We consider three ways in which Nature can select the sequence of contexts/followers:\looseness-1
\begin{enumerate}[labelwidth=1.5em, labelsep=0.5em, leftmargin=1.5em]
    \item If the sequence of contexts (resp. follower types) are drawn i.i.d. from some fixed distribution, we say that the sequence of contexts (resp. follower types) are chosen \emph{stochastically}. 
    \item If Nature chooses the sequence of contexts (resp. follower types) before the first round in order to harm the leader (possibly using knowledge of the leader's algorithm), we say that the sequence of contexts (resp. follower types) are chosen by a \emph{non-adaptive adversary}.  
    \item If Nature chooses context $\vz_t$ (resp. follower $f_t$) before round $t$ in order to harm the leader (possibly using knowledge of the leader's algorithm and the outcomes of the prior $t-1$ rounds), we say that the sequence of contexts (resp. follower types) are chosen by an \emph{adaptive adversary}.  
\end{enumerate}
Our impossibility results in~\Cref{sec:impossibility} hold when both the sequence of contexts \emph{and} the sequence of follower types are chosen by either type of adversary. 
Our positive results in~\Cref{sec:full} hold when either the sequence of contexts \emph{or} the sequence of follower types are chosen by either type of adversary (and the other sequence is chosen stochastically). 
Our extension to bandit feedback (\Cref{sec:bandit}, where the leader only gets to observe the follower's best-response instead of their type) holds whenever one sequence is chosen by a non-adaptive adversary and the other sequence is chosen stochastically.\looseness-1 
\section{On the impossibility of fully adversarial no-regret learning}\label{sec:impossibility}
We begin with a negative result: no-regret learning is not possible in the setting of~\Cref{sec:setting} if the sequence of contexts and the sequence of followers is chosen by an adversary. 
While this is not necessarily surprising given that~\Cref{def:regret1} allows for the optimal policy $\pi^*$ to be arbitrarily complex, we show that this result holds \emph{even when the policy class to which $\pi^*$ belongs is highly structured}. 
We show this via a reduction to the online linear thresholding problem, for which it is known that no-regret learning is impossible.\looseness-1 

\xhdr{Online linear thresholding.} 
The online linear thresholding problem is a repeated two-player game between a learner and an adversary. 
Before the first round, an adversary chooses a \emph{cutoff} $s \in [0, 1]$ which is \emph{unknown} to the learner.  
In each round, the adversary chooses a point $\omega_t \in [0,1]$ and reveals it to the learner. 
$\omega_t$ is assigned label $y_t = 1$ if $\omega_t > s$ and label $y_t = -1$ otherwise. 
Given $\omega_t$, the learner makes a \emph{guess} $g_t \in [0, 1]$ (the probability they place on $y_t = 1$), and receives utility $u_{\olt}(\omega_t, g_t) = g_t \cdot \mathbbm{1}\{y_t = 1\} + (1 - g_t) \cdot \mathbbm{1}\{y_t = -1\}$. 
%
%
The learner gets to observe $y_t$ after round $t$ is over. 
The learner's policy $\pi_t : [0, 1] \rightarrow [0, 1]$ is a mapping from points in $[0, 1]$ to guesses in $[0, 1]$. 
The optimal policy $\pi^*$ makes guess $\pi^*(\omega_t) = 1$ if $\omega_t > s$ and $\pi^*(\omega_t) = 0$ otherwise. 
The learner's regret after $T$ rounds is given by $R_{\olt}(T) = T - \sum_{t=1}^T u_{\olt}(\omega_t, g_t)$, 
%
%
since the optimal policy achieves utility $1$ in every round. 
In order to prove a lower bound on contextual Stackelberg regret in our setting, we make use of the following well-known lower bound on regret in the online linear thresholding setting (see e.g.~\cite{haghtalabnotes}).
\begin{lemma}\label{lem:1D}
    Any algorithm suffers regret $R_{\olt}(T) = \Omega(T)$ in the online linear thresholding problem when the sequence of points $\omega_1, \ldots, \omega_T$ is chosen by an adversary.
\end{lemma}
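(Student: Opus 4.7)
The plan is to prove the lower bound via a classical bisection argument: an adaptive adversary maintains a shrinking interval of still-plausible thresholds and queries the learner at its midpoint every round, while revealing the label that is worst for the learner.

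I would set $[L_1, R_1] = [0, 1]$ and, in round $t$, play $\omega_t = (L_t + R_t)/2$. After seeing the learner's guess $g_t \in [0,1]$, the adversary picks the label that most hurts the learner: if $g_t \leq 1/2$, declare $y_t = 1$ and update $R_{t+1} = \omega_t$, $L_{t+1} = L_t$; otherwise declare $y_t = -1$ and update $L_{t+1} = \omega_t$, $R_{t+1} = R_t$. A one-line case check then shows $u_{\olt}(\omega_t, g_t) \leq 1/2$ in both cases (the per-round utility equals $g_t$ in the first case and $1-g_t$ in the second), so the learner's cumulative utility is at most $T/2$.

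It remains to verify that there exists a single $s \in [0,1]$ consistent with every declared label, so that the optimal policy $\pi^*$ really attains utility $1$ on every round. The intervals are nested by construction and $R_{T+1} - L_{T+1} = 2^{-T} > 0$, so one can pick any $s \in [L_{T+1}, R_{T+1})$: whenever $y_t = 1$ was declared we have $s < R_{T+1} \leq R_{t+1} = \omega_t$, and whenever $y_t = -1$ was declared we have $s \geq L_{T+1} \geq L_{t+1} = \omega_t$. Combining this with the per-round bound gives $R_{\olt}(T) \geq T - T/2 = \Omega(T)$.

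The main (mild) obstacle is that the setting has the adversary commit to $s$ before round $1$, while the bisection strategy seems to determine $s$ only after all rounds play out. For a deterministic learner this is not really a problem: the full transcript is a deterministic function of the adversary's choices, so the adversary can simulate the bisection in advance and commit up-front to any $s$ in the resulting terminal interval. For a randomized learner, the same $\Omega(T)$ lower bound on expected regret follows by Yao's minimax principle, by drawing $s$ uniformly from a dyadic grid at resolution $2^{-T}$ and running bisection around the realized value.
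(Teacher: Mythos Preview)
Your bisection argument is correct. The paper does not actually prove this lemma: it is stated as a well-known fact with a citation to lecture notes, so you are supplying a proof where the paper gives none. The deterministic case is clean, and your observation that the adversary can simulate the entire transcript in advance and commit to any $s$ in the terminal interval handles the commitment-to-$s$ issue precisely.

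One minor point: your treatment of the randomized-learner case via Yao is a bit compressed. The cleanest instantiation is to draw $s$ uniformly (from $[0,1]$ or a fine dyadic grid) and have the adversary play the deterministic bisection sequence toward the realized $s$, independent of the learner's guesses; then, conditioned on the history up to round $t$, the label $y_t$ is a fair coin, so any learner's expected per-round utility is exactly $1/2$ and the expected regret is $T/2$. Your sketch points in this direction, but making the independence from $g_t$ explicit avoids any ambiguity about whether the Yao distribution is well-defined.
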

%
%
\begin{restatable}{theorem}{lowerbound}
    If an adversary can choose both the sequence of contexts $\vz_1, \ldots, \vz_T$ and the sequence of followers $f_1, \ldots, f_T$, no algorithm can achieve better than $\Omega(T)$ contextual Stackelberg regret in expectation over the internal randomness of the algorithm, even when $\pi^*$ is restricted to come from the set of linear thresholding functions.
\end{restatable}
The reduction from online linear thresholding proceeds by creating an instance of our setting such that the sequence of contexts 
$z_1, \ldots, z_T$ correspond to the sequence of points $\omega_1, \ldots, \omega_T$ encountered by the learner, and the sequence of follower types $f_1, \ldots, f_T$ correspond to the sequence of labels $y_1, \ldots, y_T$.
We then show that a no-regret algorithm in the online thresholding problem can be obtained by using an algorithm which minimizes contextual Stackelberg regret on the constructed game instance as a black box.
However this is a contradiction, since by~\Cref{lem:1D} the online thresholding problem is not online learnable by any algorithm. 
See~\Cref{fig:reduction} for a visualization of our reduction.\looseness-1 

Intuitively, this reduction works because the adversary can ``hide'' information about the follower's type $f_t$ in the context $z_t$. 
However, there exists a family of problem instances in which learning this relationship between contexts and follower types as hard as learning the threshold in the online linear thresholding problem, for which no no-regret learning algorithm exists by~\Cref{lem:1D}.  
\begin{figure}[t]
    \setlength{\belowcaptionskip}{-15pt}
    \centering
    \includegraphics[width=0.6\textwidth]{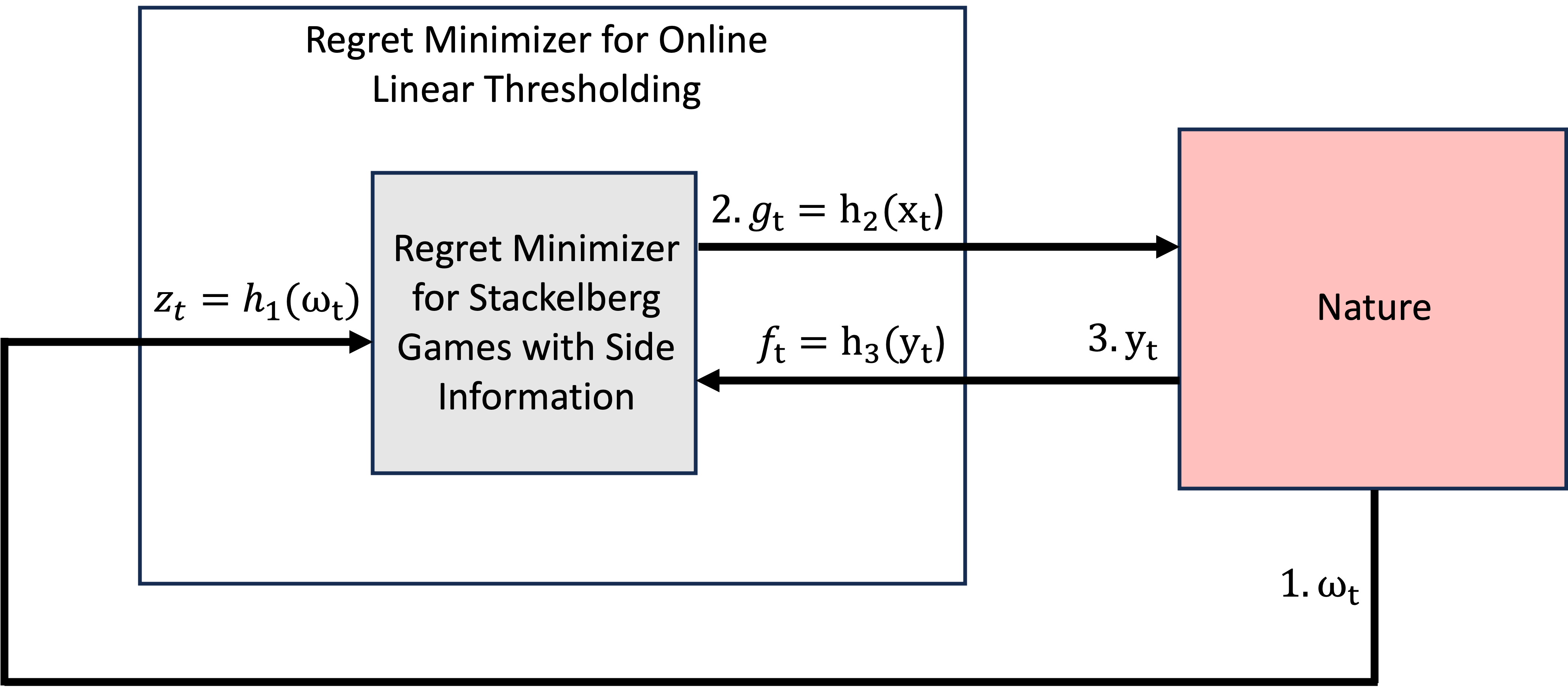}
    \caption{Summary of our reduction from the online linear thresholding problem. 
    At time $t \in [T]$, (1.) the learner observes a point $\omega_t$, (2.) the learner takes a guess $g_t$, and (3.) the learner observes the true label $y_t$. 
    Given a regret minimizer for our setting, we show how to use it in a black-box way (by constructing functions $h_1$, $h_2$, $h_3$) to achieve no-regret in the online linear thresholding problem.}
    \label{fig:reduction}
\end{figure}
\section{Limiting the power of the adversary}\label{sec:full}
Motivated by the impossibility result of~\Cref{sec:impossibility}, we study two natural relaxations of the fully adversarial setting: one in which the sequence of followers is chosen stochastically but the contexts are chosen adversarially (\Cref{sec:followers}) and one in which the sequence of contexts is chosen stochastically but followers are chosen adversarially (\Cref{sec:contexts}). 
In both settings we allow the adversary to be adaptive.\looseness-1

An important structural results for both~\Cref{sec:followers} and~\Cref{sec:contexts} is that for any context $\vz \in \cZ$, the leader incurs only negligible regret by restricting themselves to policies which map to mixed strategies in some finite (and computable) set $\cE_{\vz}$. 
In order to state this result formally, we need to introduce the notion of a \emph{contextual best-response region}, which is a generalization of the notion of a best-response region in (non-contextual) Stackelberg games (e.g. \cite{letchford2009learning, balcan2015commitment}). 
\begin{definition}[Contextual Follower Best-Response Region]\label{def:cfbrr}
    For follower type $\alpha^{(i)}$, follower action $a_f \in \cA_f$, and context $\vz \in \cZ$, let $\cX_{\vz}(\alpha^{(i)}, a_f) \subseteq \cX$ denote the set of all leader mixed strategies such that a follower of type $\alpha^{(i)}$ best-responds to all $\vx \in \cX_{\vz}(\alpha^{(i)}, a_f)$ by playing action $a_f$ under context $\vz$, i.e., $\cX_{\vz}(\alpha^{(i)}, a_f) = \{\vx \in \cX : b_{\alpha^{(i)}}(\vz, \vx) = a_f \}.$
    %
\end{definition}
\begin{definition}[Contextual Best-Response Region]\label{def:cbrr}
    For a given function $\sigma : \{\alpha^{(1)}, \ldots, \alpha^{(K)}\} \rightarrow \cA_f$, let $\cX_{\vz}(\sigma)$ denote the set of all leader mixed strategies such that under context $\vz$, a follower of type $\alpha^{(i)}$ plays action $\sigma(\alpha^{(i)})$ for all $i \in [K]$, i.e.
    $\cX_{\vz} (\sigma) = \cap_{i \in [K]} \cX_{\vz}(\alpha^{(i)}, \sigma(\alpha^{(i)})).$
\end{definition}
For a fixed contextual best-response region $\cX_{\vz}(\sigma)$, we refer to the corresponding $\sigma$ as the \emph{best-response function} for region $\cX_{\vz}(\sigma)$, as it maps each follower type to its best-response for every leader strategy $\vx \in \cX_{\vz}(\sigma)$. 
We sometimes use $\sigma^{(\vz, \vx)}$ to refer to the best-response function associated with mixed strategy $\vx$ under context $\vz$, and we use $\Sigma_{\vz}$ to refer to the set of all best-response functions under context $\vz$. 
Note that $|\Sigma_{\vz}| \leq A_f^K$ for any context $\vz \in \cZ$. 
This gives us an upper-bound on the number of best-response regions for a given context.

One useful property of all contextual best-response regions is that they are convex and bounded polytopes. 
To see this, observe that every contextual follower best-response region (and therefore every contextual best-response region) is (1) a subset of $\Delta^d$ and (2) the intersection of finitely-many half-spaces. 
While every $\cX_{\vz}(\sigma)$ is convex and bounded, it is not necessarily closed. 
If every contextual best-response region were closed, it would be without loss of generality for the leader to restrict themselves to the set of policies which map every context to an extreme point of some contextual best-response region. 
In what follows, we show that the leader does not ``lose too much'' (as measured by regret) by restricting themselves to policies which map to some \emph{approximate} extreme point of a contextual best-response region. 
\begin{definition}[$\delta$-approximate extreme points]\label{def:extreme-points}
    Fix a context $\vz \in \cZ$ and consider the set of all non-empty contextual best-response regions. 
    For $\delta > 0$, $\cE_{\vz}(\delta)$ is the set of leader mixed strategies such that for all best-response functions $\sigma$ and any $\vx \in \Delta(\cA_l)$ that is an extreme point of cl$(\cX_{\vz}(\sigma))$, $\vx \in \cE_{\vz}(\delta)$ if $\vx \in \cX_{\vz}(\sigma)$. 
    Otherwise there is some $\vx' \in \cE_{\vz}(\delta)$ such that $\vx' \in \cX_{\vz}(\sigma)$ and $\| \vx' - \vx \|_1 \leq \delta$.
\end{definition}
Note that~\Cref{def:extreme-points} is constructive. 
We set $\delta = \mathcal{O}(\frac{1}{T})$ so that the additional regret from only considering policies which map to points in $\cup_{\vz \in \cZ} \cE_{\vz}(\delta)$ is negligible. 
As a result, we use the shorthand $\cE_{\vz} := \cE_{\vz}(\delta)$ throughout the sequel. 
The following lemma is a generalization of Lemma 4.3 in \citet{balcan2015commitment} to our setting, and its proof uses similar techniques from convex analysis.\looseness-1
\begin{restatable}{lemma}{lemCE}\label{lem:cE}
For any sequence of followers $f_1, \ldots f_T$ and any leader policy $\pi$, there exists a policy $\pi^{(\cE)} : \cZ \rightarrow \cup_{\vz \in \cZ} \cE_{\vz}$ that, when given context $\vz$, plays a mixed strategy in $\cE_{\vz}$ and guarantees that $\sum_{t=1}^T u(\vz_t, \pi(\vz_t), b_{f_t}(\vz_t, \pi(\vz_t))) - u(\vz_t, \pi^{(\cE)}(\vz_t), b_{f_t}(\vz_t, \pi^{(\cE)}(\vz_t))) \leq 1$. 
%
%
Moreover, the same result holds in expectation over any distribution over follower types $\cF$. 
\end{restatable}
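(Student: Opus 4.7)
The plan is to define $\pi^{(\cE)}$ pointwise: for each context $\vz$, locate the best-response region containing $\pi(\vz)$, replace $\pi(\vz)$ with (an approximation to) the optimal extreme point of that region's closure, and argue that this substitution costs at most $\delta$ per round. Choosing $\delta = 1/T$ gives the stated bound. Formally, fix $\vz \in \cZ$ and let $\sigma \in \Sigma_{\vz}$ be the unique best-response function with $\pi(\vz) \in \cX_{\vz}(\sigma)$. Inside this region, the map $\vx \mapsto u(\vz, \vx, \sigma(\alpha^{(i)}))$ is linear in $\vx$ for every follower type $\alpha^{(i)}$, since once $\sigma$ is fixed the follower's action does not depend on $\vx$. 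Consequently, for any distribution $\cF$ over follower types, the map
\begin{equation*}
\vx \;\longmapsto\; \mathbb{E}_{\alpha \sim \cF}\bigl[u(\vz, \vx, \sigma(\alpha))\bigr]
\end{equation*}
is linear on $\cX_{\vz}(\sigma)$, and is therefore maximized on the compact set $\mathrm{cl}(\cX_{\vz}(\sigma))$ at some extreme point $\vx^\star$.

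If $\vx^\star \in \cX_{\vz}(\sigma)$, then $\vx^\star \in \cE_{\vz}$ by Definition 2.6, and I set $\pi^{(\cE)}(\vz) = \vx^\star$. Otherwise Definition 2.6 guarantees a point $\vx' \in \cE_{\vz}$ with $\vx' \in \cX_{\vz}(\sigma)$ and $\|\vx' - \vx^\star\|_1 \leq \delta$; I set $\pi^{(\cE)}(\vz) = \vx'$. In both cases $\pi^{(\cE)}(\vz) \in \cE_{\vz} \cap \cX_{\vz}(\sigma)$, so the follower's best-response to $\pi^{(\cE)}(\vz)$ under context $\vz$ is the same $\sigma(\alpha)$ as for $\pi(\vz)$. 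Using linearity and the fact that $u(\vz, a_l, a_f) \in [0,1]$ implies a $1$-Lipschitz dependence on $\vx$ in $\ell_1$-norm (via Hölder), we get for each follower type $\alpha$:
\begin{equation*}
u(\vz, \pi(\vz), \sigma(\alpha)) \;\leq\; u(\vz, \vx^\star, \sigma(\alpha)) \;\leq\; u(\vz, \pi^{(\cE)}(\vz), \sigma(\alpha)) + \delta.
\end{equation*}
The first inequality uses maximality of $\vx^\star$ on the closure; the second uses $\|\pi^{(\cE)}(\vz) - \vx^\star\|_1 \leq \delta$ (trivially $0$ in the first case).

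Summing the per-round inequality over $t = 1, \ldots, T$ (noting that $b_{f_t}(\vz_t, \pi(\vz_t)) = \sigma^{(\vz_t, \pi(\vz_t))}(f_t) = b_{f_t}(\vz_t, \pi^{(\cE)}(\vz_t))$ by construction) yields a total gap of at most $T\delta$. Setting $\delta = 1/T$ gives the desired bound of $1$. The extension to an expectation over any follower distribution $\cF$ is immediate because the per-round bound holds uniformly across follower types, so taking expectations preserves it.

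The main obstacle is the non-closedness of the best-response regions: the linear-maximizer argument naturally lives on $\mathrm{cl}(\cX_{\vz}(\sigma))$, but we must return a strategy that actually lies in $\cX_{\vz}(\sigma)$ (otherwise the follower's best-response changes and the linearity argument breaks). This is precisely what the $\delta$-approximation in Definition 2.6 is designed to handle, and it is why the bound is $1$ rather than $0$. Everything else is either a direct consequence of linearity of expected utility within a best-response region or a routine application of Hölder's inequality.
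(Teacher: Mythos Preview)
Your overall architecture matches the paper's proof: fix a context, pass to the closure of the best-response region, pick an extreme point, then use the $\delta$-approximation from Definition~\ref{def:extreme-points} to land back inside the region while paying at most $\delta$ per round. The issue is the step where you claim, for each follower type $\alpha$ individually,
\[
u(\vz, \pi(\vz), \sigma(\alpha)) \;\leq\; u(\vz, \vx^\star, \sigma(\alpha)).
\]
This does not follow. You defined $\vx^\star$ as the maximizer of the \emph{$\cF$-expectation} $\vx \mapsto \E_{\alpha \sim \cF}[u(\vz,\vx,\sigma(\alpha))]$ over $\mathrm{cl}(\cX_{\vz}(\sigma))$; that gives you only the averaged inequality $\E_{\alpha}[u(\vz,\pi(\vz),\sigma(\alpha))] \leq \E_{\alpha}[u(\vz,\vx^\star,\sigma(\alpha))]$, not a pointwise one. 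A single extreme point of a polytope need not dominate an interior point simultaneously for every linear objective, so the ``per-round bound holds uniformly across follower types'' claim is false in general. Consequently your final ``take expectations over $\cF$ afterward'' argument also breaks, because $\pi^{(\cE)}$ already depended on a specific $\cF$ through the choice of $\vx^\star$.

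The fix is exactly what the paper does: for the realized-sequence statement, choose $\vx^\star$ (hence $\pi^{(\cE)}(\vz)$) to maximize the \emph{empirically weighted} objective $\sum_{t:\vz_t=\vz} u(\vz,\vx,b_{f_t}(\vz,\vx))$ over $\mathrm{cl}(\cX_{\vz}(\sigma))$, and state the comparison only at the level of this weighted sum. Summing over the distinct contexts then gives $T\delta \leq 1$. For the distributional version, run the identical argument with the $\cF$-expectation in place of the empirical weights. In both cases the Lipschitz step and the ``same best-response'' step go through unchanged; only the pointwise-vs-averaged distinction needs repair.
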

Since we do not restrict the context space to be finite, the leader cannot pre-compute $\cE_{\vz}$ for every $\vz \in \cZ$ before the game begins. 
Instead, they can compute $\cE_{\vz_t}$ in round $t$ before they commit to their mixed strategy. 
While $\cE_{\vz_t}$ is computatable, it may be exponentially large in $A_f$ and $K$. 
However this is to be expected as~\citet{li2016catcher} show that in its general form, solving the non-contextual version of the online Stackelberg game problem is NP-Hard. 
%

%
\subsection{Stochastic follower types and adversarial contexts}\label{sec:followers}
In this setting we allow the sequence of contexts to be chosen by an adversary, but we restrict the sequence of followers to be sampled i.i.d. from some (unknown) distribution over follower types $\cF$.
When picking context $\vz_t$, we allow the adversary to have knowledge of $\cF$ and $f_1, \ldots, f_{t-1}$, but not $f_t$. 
Under this relaxation, our measure of algorithm performance is \emph{expected} contextual Stackelberg regret, where the expectation is taken over the randomness in the distribution over follower types.
\begin{definition}[Expected Contextual Stackelberg Regret]\label{def:expected1}
    Given a distribution over followers $\cF$ and a sequence of contexts $\vz_1, \ldots, \vz_T$, the leader's expected contextual Stackeleberg regret is $\E[R(T)] := \E_{f_1, \ldots, f_T \sim \cF} [ \sum_{t=1}^T u(\vz_t, \pi^*(\vz_t), b_{f_t}(\vz_t, \pi^*(\vz_t))) - u(\vz_t, \vx_t, b_{f_t}(\vz_t, \vx_t)) ]$,
    %
    %
    where $\pi^*$ is the optimal policy given knowledge of $\vz_1, \ldots, \vz_T$ and $\cF$.
\end{definition}

Under this setting, the utility for policy $\pi$ may be written as $\E_{f_1, \ldots, f_T \sim \cF} \left[\sum_{t=1}^T u(\vz_t, \pi(\vz_t), b_{f_t}(\vz_t, \pi(\vz_t))) \right] = \sum_{t=1}^T \E_{f_1, \ldots, f_{t-1} \sim \cF}[\E_t[u(\vz_t, \pi(\vz_t), b_{f_t}(\vz_t, \pi(\vz_t)))]]$. 
%
%
Our algorithm (\Cref{alg:adv_contexts}) proceeds by estimating the inner expectation $\E_t[u(\vz_t, \pi(\vz_t), b_{f_t}(\vz_t, \pi(\vz_t)))]$ as 
\begin{equation}\label{eq:est-exp}
    \widehat{\E}_t[u(\vz_t, \pi(\vz_t), b_{f_t}(\vz_t, \pi(\vz_t)))] := \int u(\vz_t, \pi(\vz_t), a_f) d\widehat{p}_t(b_{f_t}(\vz_t, \pi(\vz_t)) = a_f)
\end{equation}
and acting greedily with respect to our estimate. 
Here $\widehat{p}_t(b_{f_t}(\vz_t, \pi(\vz_t)) = a_f)$ is the (estimated) probability that the follower's best-response is $a_f$, given context $\vz_t$ and leader mixed strategy $\pi(\vz_t)$.  
As we will see, different instantiations of $\widehat{p}_t$ will lead to different regret rates for~\Cref{alg:adv_contexts}.  
However, before instantiating~\Cref{alg:adv_contexts} with a specific estimation procedure, we provide a general result which bounds the regret of~\Cref{alg:adv_contexts} in terms of the total variation distance between the sequence $\{\widehat{p}_t\}_{t \in [T]}$ and the true distribution $p$.
\begin{algorithm}[t]
        \SetAlgoNoLine
        \KwIn{$\widehat{\vp}_1$}
        \For{$t = 1, \ldots, T$}
        {
            Observe $\vz_t$, commit to
            $\vx_t = \pi_t(\vz_t) = \arg \max_{\vx \in \cE_{\vz_t}} \widehat{\E}_t[u(\vz_t, \vx, b_{f_t}(\vz_t, \vx)]$ (\Cref{eq:est-exp}).\\
            Receive utility $u(\vz_t, a_l, b_{f_t}(\vz_t, \vx_t))$ where $a_l \sim \vx_t$, and observe follower type $f_t$.\\
            Update $\widehat{\vp}_t \rightarrow \widehat{\vp}_{t+1}$
        }
        \caption{Learning with stochastic follower types: full feedback}
        \label{alg:adv_contexts}
\end{algorithm}
\begin{restatable}{theorem}{mainfollowers}\label{thm:main-followers}
    Let $\vp(\vz, \vx) := [p(b_{f_t}(\vz, \vx) = a_f)]_{a_f \in \cA_f}$ and $\widehat{\vp}_t(\vz, \vx) := [\widehat{p}_t(b_{f_t}(\vz, \vx) = a_f)]_{a_f \in \cA_f}$. 
    The expected contextual Stackelberg regret (\Cref{def:expected1}) of~\Cref{alg:adv_contexts} satisfies 
    \begin{equation*}
        \E[R(T)] \leq 1 + 2 \sum_{t=1}^T \E_{f_1, \ldots, f_{t-1}}[\tv(\vp(\vz_t, \pi^{(\cE)}(\vz_t)), \widehat{\vp}_t(\vz_t, \pi^{(\cE)}(\vz_t))) + \tv(\vp(\vz_t, \pi_t(\vz_t)), \widehat{\vp}_t(\vz_t, \pi_t(\vz_t)))]. 
    \end{equation*}
\end{restatable}
\Cref{thm:main-followers} shows that the regret of~\Cref{alg:adv_contexts} is proportional to how well it estimates $\vp(\vz, \vx)$ over time (as measured by total variation distance), on (1) the sequence of contexts chosen by the adversary and (2) the sequence of mixed strategies played by~\Cref{alg:adv_contexts} and the (near-)optimal policy $\pi^{(\cE)}$.
While we instantiate~\Cref{alg:adv_contexts} in the setting where there are finitely-many follower types and follower actions,~\Cref{thm:main-followers} opens the door to provide meaningful regret guarantees in settings in which there are infinitely-many follower types and/or follower actions.\footnote{All that is required to run~\Cref{alg:adv_contexts} in a particular problem instance is an oracle for evaluating $\widehat{\E}_t[u(\vz, \vx, b_{f_t}(\vz, \vx))]$ and updating $\widehat{\vp}_t$.} 
We now instantiate the estimation procedure in~\Cref{alg:adv_contexts} in two different ways to get end-to-end regret guarantees. 
First, the leader can get regret $O(K\sqrt{T \log T})$ by estimating the distribution of follower types directly. 
\begin{restatable}{corollary}{cork}\label{cor:K}
    If $\; \widehat{\vp}_t = \{\widehat{p}_t(f_t = \alpha^{(i)})\}_{i \in [K]}$, $\widehat{p}_{t+1}(f = \alpha^{(i)}) = \frac{1}{t} \sum_{\tau=1}^t \mathbbm{1}\{f_{\tau} = \alpha^{(i)}\}$, and $\widehat{p}_1(f = \alpha^{(i)}) = \frac{1}{K}$ for $i \in [K]$, then the regret of~\Cref{alg:adv_contexts} satisfies $\E[R(T)] = O (K\sqrt{T \log(T)} )$.\looseness-1 
    %
\end{restatable}
The leader can obtain a complementary regret bound of $O(A_f \sqrt{T \log T})$ if they instead estimate the probability that the follower best-responds with action $a_f \in \cA_f$, given a particular context $\vz$ and leader mixed strategy $\vx$.\footnote{In general $K$ has no dependence on $A_f$, and vice versa.} 
In what follows, we use $\mathbbm{1}_{(\sigma^{(\vz, \vx)} = a_f)} \in \{0, 1\}^{K}$ to refer to the indicator vector whose $i$-th component is $\mathbbm{1}\{\sigma^{(\vz, \vx)}(\alpha^{(i)}) = a_f\}$, i.e. the indicator that a follower of type $\alpha^{(i)}$ best-responds to context $\vz$ and mixed strategy $\vx$ by playing action $a_f$. 
\begin{restatable}{corollary}{corAf}\label{cor:Af}
    If $\widehat{\vp}_t(\vz, \vx) = \{\widehat{p}_t(\mathbbm{1}_{(\sigma^{(\vz, \vx)} = a_f)})\}_{a_f \in \cA_f}$, 
    $\widehat{p}_{t+1}(\mathbbm{1}_{(\sigma^{(\vz, \vx)} = a)}) = \frac{1}{t} \sum_{\tau=1}^t \mathbbm{1}\{b_{f_{\tau}}(\vz, \vx) = a\}$, and 
    $\widehat{p}_1(\mathbbm{1}_{(\sigma^{(\vz, \vx)} = a)}) = \frac{1}{A_f}$ for $a_f \in \cA_f$,
    then the regret of~\Cref{alg:adv_contexts} satisfies $\E[R(T)] = O (A_f \sqrt{T \log(T)} )$.\looseness-1
    %
    %
\end{restatable}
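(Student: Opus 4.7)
The plan is to mirror the proof of Corollary~\ref{cor:K} almost verbatim, substituting the ``estimate the distribution over follower types'' step with ``for each follower action $a_f$, estimate the probability that a random follower best-responds with $a_f$ under a given context and mixed strategy.'' The three steps are: (i) expand the total variation distance as a sum over $A_f$ empirical-deviation terms; (ii) apply Hoeffding's inequality term-by-term; (iii) plug the resulting bound into Theorem~\ref{thm:main-followers} and sum over $t \in [T]$.

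For step (i), I would write
\begin{equation*}
    \tv(\vp(\vz, \vx), \widehat{\vp}_t(\vz, \vx)) = \frac{1}{2} \sum_{a \in \cA_f} \frac{1}{t-1} \left| \sum_{\tau=1}^{t-1} \mathbbm{1}\{b_{f_\tau}(\vz, \vx) = a\} - \E_{f_\tau}[\mathbbm{1}\{b_{f_\tau}(\vz, \vx) = a\}] \right|,
\end{equation*}
using the definition of $\widehat{p}_t$ together with $\E_{f \sim \cF}[\mathbbm{1}\{b_f(\vz, \vx) = a\}] = p(\mathbbm{1}_{(\sigma^{(\vz, \vx)} = a)})$. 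For step (ii), for any fixed $(\vz, \vx, a)$ the summands $\mathbbm{1}\{b_{f_\tau}(\vz, \vx) = a\}$ are i.i.d.\ Bernoulli random variables (since $f_1, \ldots, f_{t-1}$ are i.i.d.\ draws from $\cF$), so Hoeffding's inequality together with a union bound over $t \in [T]$ and $a \in \cA_f$ yields that each summand above is at most $2\sqrt{\log(2T)/(t-1)}$ simultaneously with probability at least $1 - 1/T^2$. For step (iii), invoking Theorem~\ref{thm:main-followers} then gives
\begin{equation*}
    \E[R(T)] \leq 1 + 4 \sum_{t=1}^T A_f \left( \sqrt{\frac{\log(2T)}{t-1}} + \frac{1}{2T} \right) = O\left(A_f \sqrt{T \log T}\right),
\end{equation*}
using $\sum_{t=1}^T (t-1)^{-1/2} = O(\sqrt{T})$.

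The main obstacle I expect is that, unlike in Corollary~\ref{cor:K} (where $\widehat{\vp}_t$ does not depend on $(\vz, \vx)$), here $\widehat{\vp}_t(\vz, \vx)$ genuinely depends on $(\vz, \vx)$, and under an adaptive adversary the evaluation point $(\vz_t, \pi_t(\vz_t))$ is itself a random function of $f_1, \ldots, f_{t-1}$, so Hoeffding cannot be applied directly to a random $(\vz, \vx)$ pair. My resolution is to observe that $\widehat{p}_t(\mathbbm{1}_{(\sigma^{(\vz, \vx)} = a)}) = \frac{1}{t-1}\sum_{\tau=1}^{t-1} \mathbbm{1}\{\sigma^{(\vz, \vx)}(f_\tau) = a\}$ depends on $(\vz, \vx)$ only through the best-response function $\sigma^{(\vz, \vx)}$, which always lies in the fixed, context-independent set of all maps $\{\alpha^{(1)}, \ldots, \alpha^{(K)}\} \to \cA_f$ of cardinality at most $A_f^K$. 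Strengthening the union bound in step (ii) to range over $T \cdot A_f \cdot A_f^K$ events (times, actions, and best-response patterns) yields a bound that holds uniformly over every $(\vz, \vx)$ an adaptive adversary and the algorithm could jointly produce, at the cost of only an additional $K \log A_f$ inside the Hoeffding log factor; under the standing assumption $K, A_f \leq T$ this is absorbed into the $\log T$ term and the claimed $O(A_f \sqrt{T \log T})$ rate is preserved.
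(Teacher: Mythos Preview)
Your three-step argument is essentially identical to the paper's own proof, which is a near-verbatim analogue of the proof of \Cref{cor:K} with $K$ replaced by $A_f$: the paper expands the total variation distance as a sum over $a\in\cA_f$, applies Hoeffding term-by-term with a union bound only over $t$ and $a$, and plugs into \Cref{thm:main-followers}, exactly as you outline.

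You go beyond the paper in flagging a real subtlety the paper silently skips: the evaluation points $(\vz_t,\pi_t(\vz_t))$ and $(\vz_t,\pi^{(\cE)}(\vz_t))$ depend on $f_1,\ldots,f_{t-1}$ (through both the adaptive adversary's choice of $\vz_t$ and the algorithm's choice of $\pi_t$), so a Hoeffding bound derived for a \emph{fixed} $(\vz,\vx)$ cannot be substituted directly. Your resolution via a union bound over the at most $A_f^K$ best-response maps $\sigma$ is the right idea, but the last sentence is where it breaks: the extra $K\log A_f$ inside the Hoeffding logarithm is \emph{not} absorbed into $\log T$ under $K,A_f\le T$. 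Taking, e.g., $K=\sqrt{T}$ and $A_f=2$ already gives $K\log A_f=\Theta(\sqrt{T})$, so the bound becomes $O\bigl(A_f\sqrt{T\cdot\sqrt{T}}\,\bigr)=O(T^{3/4})$ rather than the claimed $O(\sqrt{T\log T})$; with $K=\Theta(T)$ it degrades to $O(T)$. So while your core proof matches the paper's, your attempted patch of the uniformity issue does not recover the stated $O(A_f\sqrt{T\log T})$ rate in general; the paper itself simply does not address this point.
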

\subsection{Stochastic contexts and adversarial follower types}\label{sec:contexts}
We now consider the setting in which the sequence of contexts are drawn i.i.d. from some unknown distribution $\cP$ and the follower $f_t$ is chosen by an adversary with knowledge of $\cP$ and $\vz_1, \ldots, \vz_{t-1}$, but not $\vz_t$. 
As was the case in~\Cref{sec:followers}, we consider a relaxed notion of regret which compares the performance of the leader to the best policy in expectation, although now the expectation is taken with respect to the distribution over contexts $\cP$.
\begin{definition}[Expected Contextual Stackelberg Regret, II]\label{def:expected2}
    Given a distribution over contexts $\cP$ and a sequence of followers $f_1, \ldots, f_T$, the leader's expected contextual Stackeleberg regret is 
    \begin{equation*}
        \E[R(T)] := \E_{\vz_1, \ldots, \vz_T \sim \cP} \left[ \sum_{t=1}^T u(\vz_t, \pi^*(\vz_t), b_{f_t}(\vz_t, \pi^*(\vz_t))) - u(\vz_t, \vx_t, b_{f_t}(\vz_t, \vx_t)) \right],
    \end{equation*}
    where $\pi^*$ is the optimal policy given knowledge of $f_1, \ldots, f_T$ and $\cP$.
\end{definition}
Our key insight is that when the sequence of contexts is generated stochastically, to obtain no-regret it suffices to (1) play a standard, off-the-shelf online learning algorithm (e.g. Hedge) over a finite (albeit exponentially-large) set of policies in order to find one which is approximately optimal and then (2) bound the resulting discretization error.
\begin{algorithm}[t]
        \SetAlgoNoLine
        \KwIn{Set of weights $\Omega$}
        %
        %
        Let $\vq_1[\pv{\pi}{\vomega}] := 1$, $\vp_1[\pv{\pi}{\vomega}] := \frac{1}{|\Pi|}$ for all $\pv{\pi}{\vomega} \in \Pi := \{\pv{\pi}{\vomega}\}_{\vomega \in \Omega}$\\
        
        \For{$t = 1, \ldots, T$}
        {
            Sample $\pi_t \sim \vp_t$, $a_{l,t} \sim \pi_t(\vz_t)$,
            receive utility $u(\vz_t, a_{l,t}, b_{f_t}(\pi_t(\vz_t)))$, observe type $f_t$.\looseness-1\\
            For each policy $\pv{\pi}{\vomega} \in \Pi$, compute $\vl_t[\pv{\pi}{\vomega}] := -u(\vz_t, \pv{\pi}{\vomega}(\vz_t), b_{f_t}(\vz_t, \pv{\pi}{\vomega}(\vz_t)))$ and set
            $\vq_{t+1}[\pv{\pi}{\vomega}] = \exp ( - \eta \sum_{s=1}^t \vl_s[\pv{\pi}{\vomega}] )$, $\vp_{t+1}[\pv{\pi}{\vomega}] = \vq_{t+1}[\pv{\pi}{\vomega}] / \sum_{\pv{\pi}{\vomega'} \in \Pi} \vq_{t+1}[\pv{\pi}{\vomega'}]$.
        }
        \caption{Learning with stochastic contexts: full feedback}
        \label{alg:adv_followers}
\end{algorithm}
\begin{restatable}{lemma}{lemopt}\label{lem:opt}
    When the sequence of contexts is determined stochastically, the expected utility of any fixed policy $\pi$ may be written as 
    \begin{equation*}
    \begin{aligned}
        \E_{\vz_1, \ldots, \vz_T} \left[\sum_{t=1}^T u(\vz_t, \pi(\vz_t), b_{f_t}(\vz_t, \pi(\vz_t))) \right] = \sum_{i=1}^K \E_{\vz}[u(\vz, \pi(\vz), b_{\alpha^{(i)}}(\vz, \pi(\vz)))] \left(\sum_{t=1}^T \E_{\vz_1, \ldots, \vz_{t-1}}[\mathbbm{1}\{f_t = \alpha^{(i)}\}]\right).
    \end{aligned}
    \end{equation*}
\end{restatable}
Using~\Cref{lem:opt}, we now show that it suffices to play Hedge over a finite set of policies $\Pi$ in order for the leader to obtain no-regret (\Cref{alg:adv_followers}). 
The key step in our analysis is to show that the discretization error is small for our chosen policy class $\Pi$.\footnote{Interestingly, there is no discretization error if the sequence of follower types is chosen by a \emph{non-adaptive} adversary. To see this, one can repeat the proof of~\Cref{lem:opt}, using the fact that $f_1, \ldots, f_T$ is independent from the realized draws $\vz_1, \ldots, \vz_T$ when the sequence of follower types is chosen by a non-adaptive adversary.} 
For a given weight vector $\vomega \in \mathbb{R}^{K}$, let $\pi^{(\vomega)}(\vz) := \arg\max_{\vz \in \cE_{\vz}} \sum_{i=1}^K u(\vz, \vx, b_{\alpha^{(i)}}(\vz, \vx)) \cdot \vomega[i]$. 
For a given set of weight vectors $\Omega$, we set $\Pi$ to be the induced policy class, i.e. $\Pi := \{\pi^{(\vomega)}\}_{\vomega \in \Omega}$.
%
%
\begin{restatable}{theorem}{thmscaf}
    If $\Omega = \{\vomega \; : \; \vomega \in \Delta^{K}, T\cdot \vomega[i] \in \mathbb{N}, \forall i \in [K]\}$ and $\eta = \sqrt{\frac{\log \Pi}{T}}$, then~\Cref{alg:adv_followers} obtains expected contextual Stackelberg regret (\Cref{def:expected2}) $\E[R(T)] = O \left(\sqrt{K T \log T} + K \right)$. 
    %
    %
    %
    %
\end{restatable}
We conclude by briefly comparing our results with those of the non-contextual Stackelberg game setting of~\citet{balcan2015commitment}. 
In particular, the setting of this subsection may be viewed as a generalization of the setting of~\citet{balcan2015commitment} in which the leader and follower utilities at time $t$ also depend on a stochastically-generated context $\vz_t$. 
When $|\cZ| = 1$, we recover their setting exactly. 
Under their non-contextual setting, there is only one set of approximate extreme points $\cE_{\vz}$, and so we write the $\cE = \cE_{\vz}$. 
Here it suffices to consider the set of constant ``policies'' which always map to one of the (approximate) extreme points in $\cE$. 
Plugging this choice of $\Pi$ into~\Cref{alg:adv_followers}, we recover their algorithm (and therefore also their regret rates) exactly. 

However, it is also worth noting that more care must be taken to obtain regret guarantees against an adaptive adversary in our setting compared to the non-contextual setting of~\citet{balcan2015commitment}.\footnote{Indeed, our notion of contextual Stackelberg regret is stronger than their non-contextual version of regret.} 
In particular, we need to bound the discretization error due to considering a finite set of policies, but it is without loss of generality to consider a finite set of mixed strategies in the non-contextual setting.

\subsection{Simulations}
\begin{figure}[t]'
    \centering
    \begin{subfigure}[b]{0.32\textwidth}
         \centering
        \includegraphics[width=\textwidth]{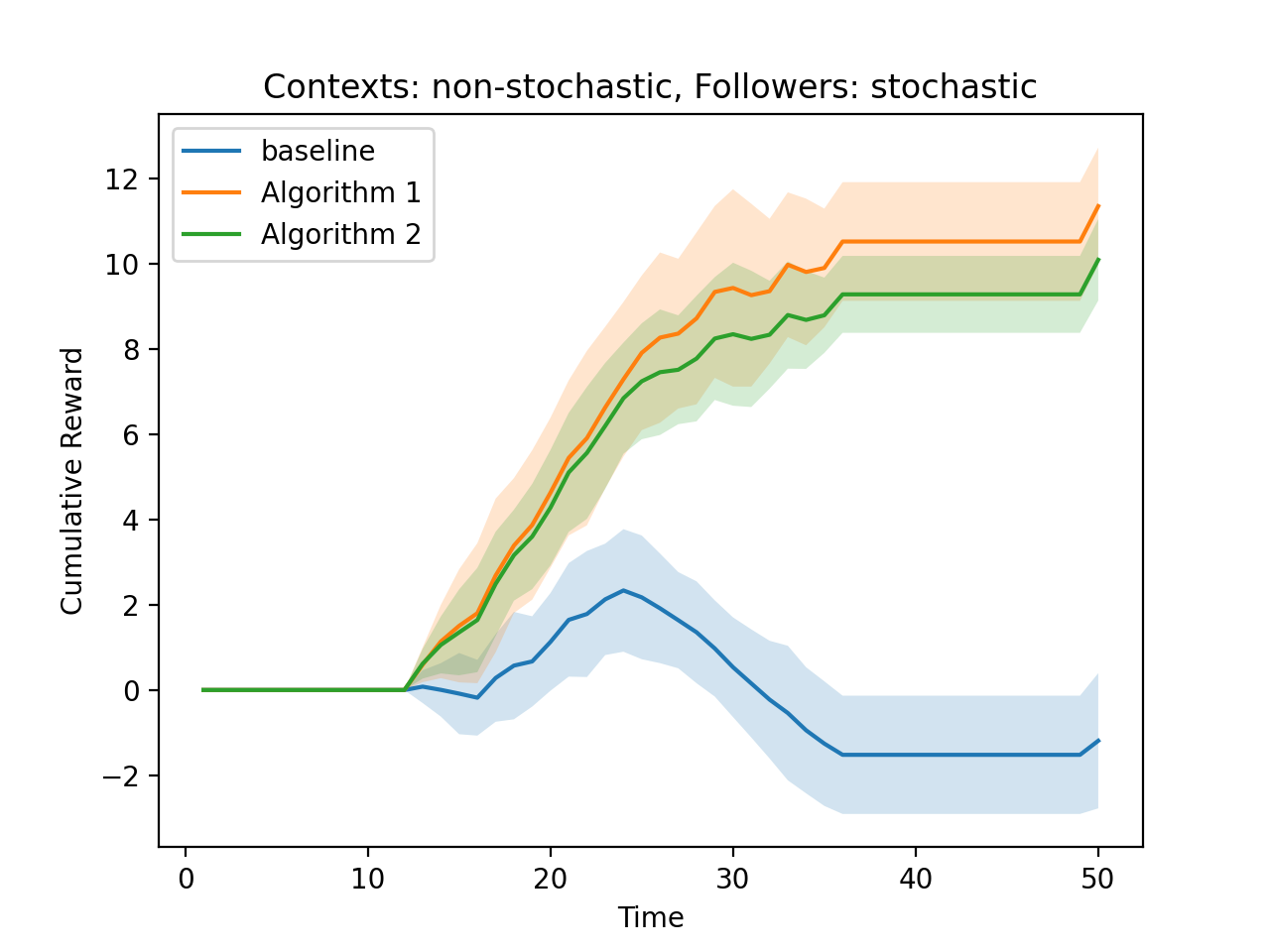}
        \caption{Non-stochastic contexts, stochastic follower types.}
        \label{fig:adv_stoch}
    \end{subfigure}
    \hfill
    \begin{subfigure}[b]{0.32\textwidth}
         \centering
        \includegraphics[width=\textwidth]{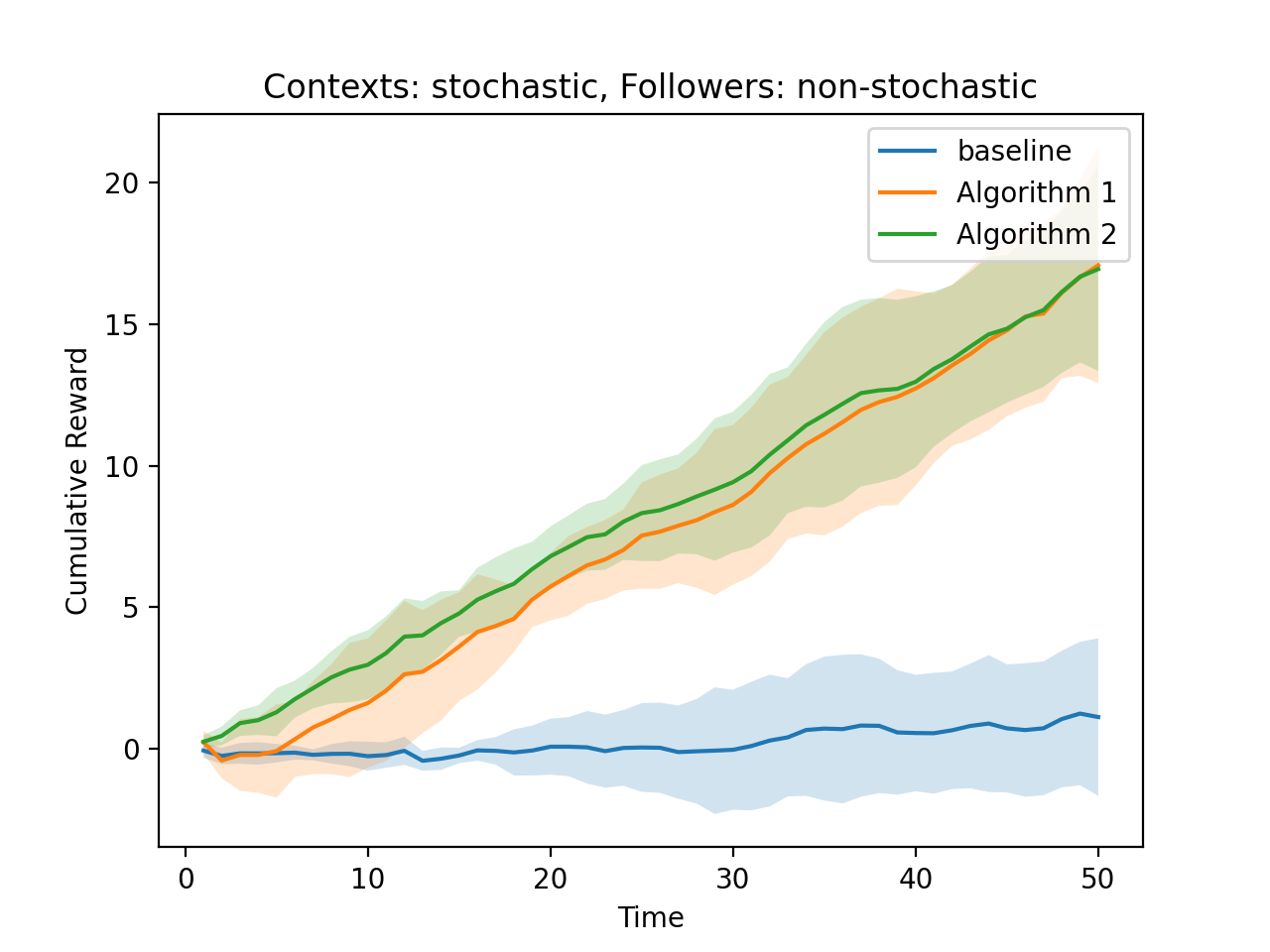}
        \caption{Stochastic contexts, non-stochastic follower types.}
        \label{fig:stoch_adv}
    \end{subfigure}
    \hfill 
    \begin{subfigure}[b]{0.32\textwidth}
        \centering
        \includegraphics[width=\textwidth]{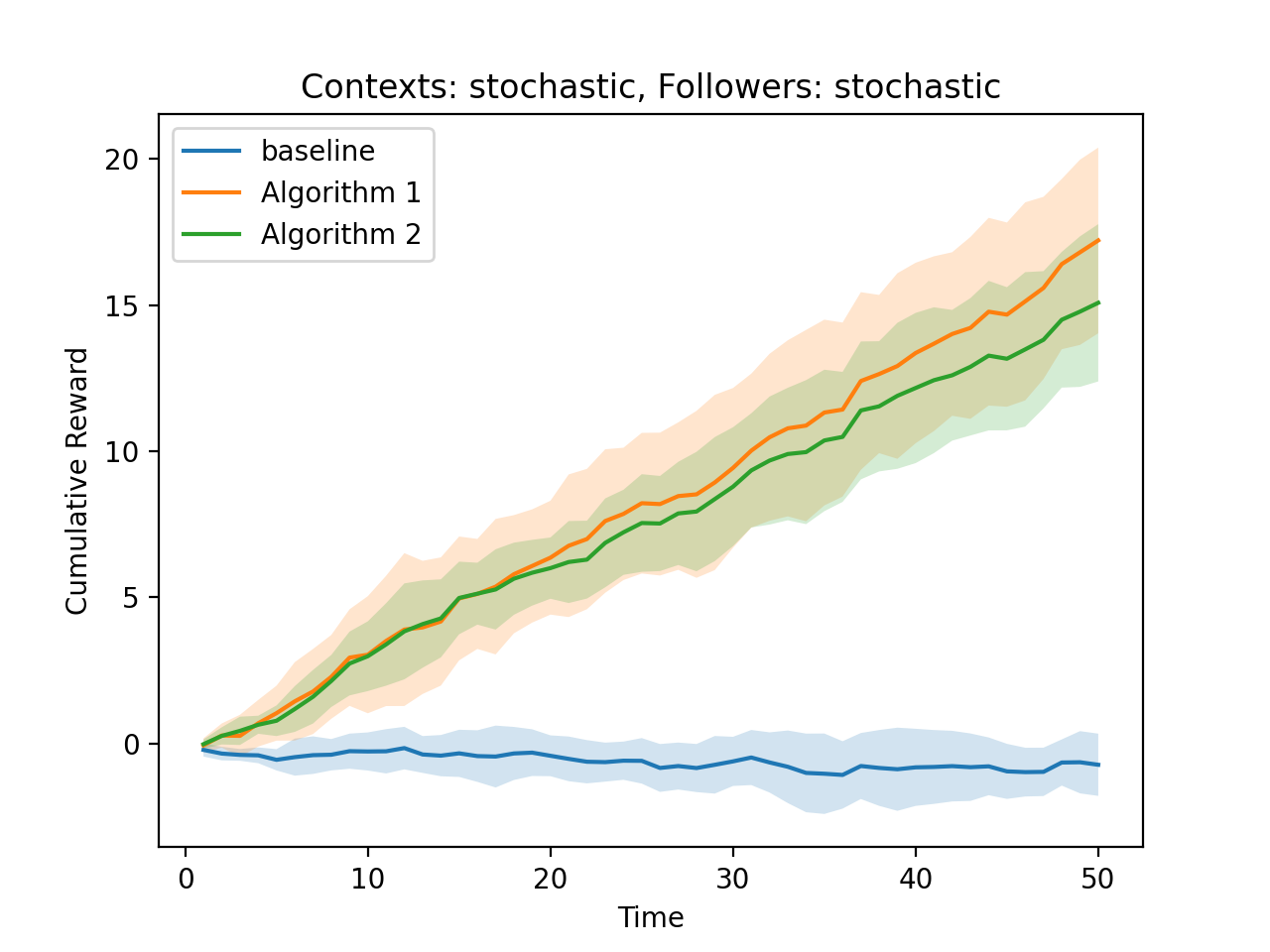}
        \caption{Stochastic contexts, stochastic follower types.}
        \label{fig:stoch_stoch}
    \end{subfigure}
    \caption{Cumulative average reward of Algorithm 1, Algorithm 2, and the algorithm of~\citet{balcan2015commitment} (which does not take side information into consideration) over five runs in a synthetic data setup. Shaded regions represent one standard deviation.}
    \label{fig:rewards}
\end{figure}
We empirically evaluate the performance of~\Cref{alg:adv_contexts} and~\Cref{alg:adv_followers} on synthetically-generated data. 
We consider a setup in which $K=5$, $A = A_f = 3$, and the context dimension $d=3$. 
Utility functions are linear in both the context and player actions, and are sampled u.a.r. from $[-1, 1]^{3 \times 3 \times 3}$.\looseness-1 

We compare the cumulative reward of our algorithms to each other and the algorithm of~\citet{balcan2015commitment} (which does not leverage side information) as a baseline. 
We simulate non-stochastic context arrivals in~\Cref{fig:adv_stoch} by displaying the same context for $T/4$ time-steps in a row. 
Follower types are chosen u.a.r. from each of the five follower types. 
In~\Cref{fig:stoch_adv}, contexts are generated stochastically by sampling each component u.a.r. from $[-1, 1]$. 
Followers are chosen non-stochastically by deterministically cycling over the five types. 
In~\Cref{fig:stoch_stoch}, both contexts and follower types are chosen stochastically. 
Specifically, contexts are generated as in~\Cref{fig:stoch_adv} and follower types are generated as in~\Cref{fig:adv_stoch}.\looseness-1 

We find that~\Cref{alg:adv_contexts} and~\Cref{alg:adv_followers} perform similarly across instances, and both significantly out-perform the baseline of~\citet{balcan2015commitment}. 
It would be interesting to find instances for which~\Cref{alg:adv_contexts} (resp.~\Cref{alg:adv_followers}) performs poorly whenever followers (resp. contexts) are chosen non-stochastically.\looseness-1 
\section{Extension to bandit feedback}\label{sec:bandit}
We have so far assumed that the leader gets to observe the follower's type after each round. 
However this assumption may not always hold in real-world Stackelberg game settings. 
For example, in cyber security domains it may be hard to deduce the organization responsible for a failed cyber attack. 
In wildlife protection, a very successful poacher may never be seen by the park rangers. 
Instead, the leader may only be able to observe the action the follower takes at each round. 
Following previous work on learning in non-contextual Stackelberg games, we refer to this type of feedback as \emph{bandit} feedback. 
What can we say about the leader's ability to learn under bandit feedback when there is side information?\looseness-1

While our impossibility result of~\Cref{sec:impossibility} immediately applies to this more challenging setting, our algorithms from~\Cref{sec:full} do not. 
This is because we can no longer compute quantities such as $\mathbbm{1}\{f_t = \alpha^{(i)}\}$ or $b_{f_t}(\vz_t, \vx)$ for an arbitrary mixed strategy $\vx$ from just follower $f_t$'s action alone. 
We still assume that the follower is one of $K$ different types, although $f_t$ is now \emph{never} revealed to the leader.\looseness-1 

We allow ourselves two relaxations when designing learning algorithms which operate under bandit feedback. 
First, while the leader's utility function may still depend on the context $\vz_t$, we assume that the follower's utility is a function of the leader's mixed strategy $\vx_t$ alone, i.e. $u_f(\vz, \vx, a_f) = u_f(\vx, a_f)$ for all $\vz \in \cZ$. 
This allows us to drop the dependence on $\vz_t$ from both the follower's best response and the set of approximate extreme points, i.e. $b_f(\vz, \vx)$ becomes $b_f(\vx)$ and $\cE_{\vz}$ becomes $\cE$. 
Furthermore, our definitions of contextual follower best-response region (\Cref{def:cfbrr}) and contextual best-response region (\Cref{def:cbrr}) collapse to their non-contextual counterparts. 
Depending on the application domain, the assumption that only the leader's utility depends on the side information may be reasonable. 
For instance, while an institution would prefer that a server with less traffic is hacked compared to one with more, a hacker might only care about the information hosted on the server (which may not be related to network traffic patterns).
Second, we design algorithms with regret guarantees which only hold against a \emph{non-adaptive} adversary.\footnote{We hypothesize that our results in this section could be extended to hold against an adaptive adversary by using more clever exploration strategies.} 
Despite these relaxations, the problem of learning under bandit feedback still remains challenging because of the exponentially large size of $\cE$. 
While a natural first step is to estimate $p(b_{f_t}(\vx) = a_f)$ (i.e. the probability that follower at round $t$ best-responds with action $a_f$ when the leader plays mixed strategy $\vx$) for all $\vx \in \cE$ and $a_f \in \cA_f$, doing so naively would take exponentially-many rounds, due to the size of $\cE$. 

Building off of results in the non-contextual setting of~\citet{balcan2015commitment}, we leverage the fact that the leader's utility for different mixed strategies is not independent. 
Instead, they are \emph{linearly} related through the frequency of follower types which take a particular action, given a particular leader mixed strategy. 
Therefore, it suffices to estimate this linear function (which can be done using as few as $K$ samples) to get an unbiased estimate of $p(b_{f_t}(\vx) = a_f)$ for any $\vx \in \cE$ and $a_f \in \cA_f$. 
Borrowing from the literature on linear bandits, we use a \emph{barycentric spanner}~\cite{awerbuch2008online} to estimate $\{\{p(b_{f_t}(\vx) = a_f)\}_{a_f \in \cA_f}\}_{\vx \in \cE}$ in both partial adversarial settings we consider. 
A barycentric spanner for compact vector space $\cW$ is a special basis such that any vector in $\cW$ may be expressed as a linear combination of elements in the basis, \emph{with each linear coefficient being in the range $[-1, 1]$}. 

In~\Cref{sec:bandit-follower}, we use the property that estimators constructed using barycentric spanners have \emph{low variance} to show that an explore-then-exploit algorithm achieves $O(K^{2/3} A_f^{2/3} T^{2/3} \log^{1/3} T)$ expected contextual Stackelberg regret in the setting with stochastic follower types and adversarial contexts. 
Specifically, our algorithm (\Cref{alg:adv_contexts_bandit}) plays a special set of $K$ mixed strategies $N$ times each, then uses barycentric spanners to estimate $\{p_t(\mathbbm{1}_{(\sigma^{(\vz, \vx)} = a_f)})\}_{a_f \in \cA_f}$ for all $\vx \in \cX$ and $\vz \in \cZ$, after which~\Cref{alg:adv_contexts_bandit} plays greedily like in~\Cref{sec:followers}.  

In~\Cref{sec:bandit-context}, we use the property that estimators constructed using barycentric spanners are \emph{bounded} to design a reduction to our algorithm in~\Cref{sec:contexts} which achieves $O(K A_f^{1/3} T^{2/3} \log^{1/3} T)$ expected contextual Stackelberg regret whenever the sequence of contexts is chosen stochastically and the sequence of follower types is chosen by an adversary. 
Finally, while it may be possible to obtain $O(\sqrt{T})$ regret without using barycentric spanners, this would come at the cost of a linear dependence on $|\cE|$ (and therefore an \emph{exponential} dependence on $K$ and $A_f$) in regret.\looseness-1
\section{Conclusion}
We initiate the study of Stackelberg games with side information, which despite the presence of side information in many Stackelberg game settings, has not received attention from the community. 
We focus on the online setting in which the leader faces a sequence of contexts and follower types. 
We show that when both sequences are chosen adversarially, no-regret learning is not possible even for highly structured policy classes. 
When either sequence is chosen stochastically, we obtain algorithms with $\Tilde{O}(\sqrt{T})$ regret. 
We also explore an extension to bandit feedback, in which we obtain $\Tilde{O}(T^{2/3})$ regret in both settings. 
There are several exciting avenues for future research; we highlight two below.\looseness-1
\begin{enumerate}[wide]
    \item \textbf{Intermediate forms of adversary.} The two relaxations of the fully adversarial setting that we consider, while natural, rule out the leader learning about the follower's type from the context. 
    Although we prove that learning is impossible in the fully adversarial setting, our lower bound does not rule out, e.g. settings where the mapping from contexts to follower types has finite Littlestone dimension. 
    It would be interesting to further explore this direction to pin down when no-regret learning is possible.\looseness-1 
    \item $\Tilde{O}(T^{1/2})$ \textbf{regret under bandit feedback.}~\citet{bernasconi2023optimal} obtain $O(T^{1/2})$ regret when learning in non-contextual Stackelberg games under bandit feedback against an adversarially-chosen sequence of follower types via a reduction to adversarial linear bandits. 
    %
    %
    However, applying similar steps to~\citeauthor{bernasconi2023optimal} in our setting results in a reduction to a generalization of the (adversarial) contextual bandit problem for which we are not aware of any regret minimizing algorithm. 
    Nevertheless, we view exploring whether $\Tilde{O}(T^{1/2})$ contextual Stackelberg regret is possible under bandit feedback as a natural and exciting future direction.\looseness-1 
    %
    %
\end{enumerate}
\newpage

\section*{Acknowledgements}
This work was supported in part by NSF Grants CCF-1910321, \#1763786, and by an NDSEG fellowship. 
The authors would like to thank the anonymous reviewers for valuable feedback, and Martino Bernasconi, Matteo Castiglioni, and Andrea Celli for helpful discussions surrounding related work.\looseness-1 
\bibliographystyle{plainnat}
\bibliography{refs}

\newpage
\appendix

\section{Appendix for~\Cref{sec:impossibility}: On the impossibility of fully adversarial no-regret learning}
\lowerbound*
\begin{proof}
    We proceed via proof by contradiction. 
    Assume that there exists an algorithm $\alg$ which achieves $o(T)$ contextual Stackelberg regret against an adversarially-chosen sequence of contexts and follower types.
    Note that at every time-step, $\alg$ takes as input a context $\vz_t$ and produces a mixed strategy $\vx_t$.
    We now describe the family of contextual Stackelberg game instances we reduce to. 
    Consider the setting in which there are two follower types ($\alpha^{(1)}$ and $\alpha^{(2)}$) and two leader/follower actions ($\cA = \cA_f = \{a_1, a_2\}$). 
    Suppose that the context space is of the form  $\cZ = [0,1]$, and that regardless of the realized context or leader mixed strategy, the best-response of follower type $\alpha^{(1)}$ is to play action $a_1$ ($b_{\alpha^{(1)}}(\vz, \vx) = a_{1}$, $\forall \vz \in \cZ, \vx \in \cX$) and the best-response of follower type $\alpha^{(2)}$ is to play action $a_2$ ($b_{\alpha^{(2)}}(\vz, \vx) = a_{2}$, $\forall \vz \in \cZ, \vx \in \cX$). 
    Since the follower's best-response does not depend on the leader's mixed strategy or the context, we use the shorthand $b_{f_t} := b_{f_t}(\vz_t, \vx_t)$.
    Finally, suppose that the leader's utility function is given by $u(\vz, a_l, a_f) = \mathbbm{1}\{a_l = a_f\}$. 
    Note that this is a special case of our general setting (described in~\Cref{sec:setting}).

    The reduction from online linear thresholding proceeds as follows. 
    In each round $t \in [T]$, 
    \begin{enumerate}
        \item Given a point $\omega_t \in [0,1]$, we give the context $z_t = \omega_t$ as input to $\alg$. 
        \item In return, we receive mixed strategy $\vx_t \in \Delta(\{a_1, a_2\})$ from $\alg$.  
        We set $g_t = \vx_t[1]$.\footnote{Observe that $\vx_t[2] = 1 - g_t$.} 
        \item Play according to $g_t$, and receive label $y_t$ and utility $u_{\olt}(\omega_t, g_t)$ from Nature. 
        Give follower type 
        \begin{equation*}
            f_t = 
            \begin{cases}
                \alpha^{(1)} \; \text{ if } \; y_t = 1\\
                \alpha^{(2)} \; \text{ if } \; y_t = -1\\
            \end{cases}
        \end{equation*}
        and utility $u(\vz_t, \vx_t, b_{f_t}) = \vx_t[1] \cdot \mathbbm{1}\{b_{f_t} = a_1\} + \vx_t[2] \cdot \mathbbm{1}\{b_{f_t} = a_2\}$ as input to $\alg$. 
    \end{enumerate}
    Observe that under this reduction, 
    \begin{equation}\label{eq:opt-policy}
        \pi^*(\vz) = 
        \begin{cases}
            [1 \; 0]^\top \; \text{ if } \; z > s \; \text{ and}\\
            [0 \; 1]^\top \; \text{ otherwise.}
        \end{cases}
    \end{equation}
    since if $z > s$, $f_t = \alpha^{(1)}$ and otherwise $f_t = \alpha^{(0)}$. 
    By playing according to $\pi^*$, we can ensure that $u(z_t, \pi^*(z_t), b_{f_t}) = 1$ for all $t \in [T]$. 
    $\pi^*$ must then be optimal, because $1$ is the largest possible per-round utility that the leader can receive. 

    Since $\alg$ achieves no-contextual-Stackelberg-regret, we know by~\Cref{def:regret1} that
    \begin{equation}\label{eq:no-reg}
        R(T) = \sum_{t=1}^T u(\vz_t, \pi^*(\vz_t), b_{f_t}) - u(\vz_t, \vx_t, b_{f_t}) = o(T).
    \end{equation}
    To conclude, it suffices to show that $R_{\olt}(T) = o(T)$ using~\Cref{eq:opt-policy} and~\Cref{eq:no-reg}. 
    Applying~\Cref{eq:opt-policy}, we see that 
    \begin{equation}\label{eq:mid}
    \begin{aligned}
        R(T) 
        &= T - \sum_{t=1}^T (\vx_t[1] \cdot \mathbbm{1}\{b_{f_t} = a_1\} + \vx_t[2] \cdot \mathbbm{1}\{b_{f_t} = a_2\}).
    \end{aligned}
    \end{equation}
    By construction, $\mathbbm{1}\{b_{f_t} = a_1\} = \mathbbm{1}\{y_t = 1\}$, $\mathbbm{1}\{b_{f_t} = a_2\} = \mathbbm{1}\{y_t = -1\}$, $\vx_t[1] = g_t$, and $\vx_t[2] = 1 - g_t$. 
    Substituting this into~\Cref{eq:mid}, we see that 
    \begin{equation}\label{eq:mid2}
    \begin{aligned}
        R(T) = T - \sum_{t=1}^T (g_t \cdot \mathbbm{1}\{y_t = 1\} + (1 - g_t) \cdot \mathbbm{1}\{y_t = -1\}) =: R_{\olt}(T).
    \end{aligned}
    \end{equation}
    By~\Cref{eq:no-reg} and~\Cref{eq:mid2}, we can conclude that $R_{\olt}(T) = o(T)$. 
    However, this is a contradiction since no no-regret learning algorithm exists for the online linear thresholding problem by~\Cref{lem:1D}. 
    Therefore it must not be possible to achieve no-contextual-Stackelberg-regret whenever the sequence of contexts and follower types is chosen by an adversary.

\end{proof}
\section{Appendix for Section~\ref{sec:full}: Limiting the power of the adversary}
\lemCE*
\begin{proof}
Observe that for any $\vz$,
\begin{equation*}
\begin{aligned}
    \pi^*(\vz) &:= \arg\max_{\vx \in \Delta(\cA_l)} \sum_{t=1}^T \mathbbm{1}\{\vz_t = \vz\} \sum_{a_l \in \cA_l} \vx[a_l] \cdot u(\vz, a_l, b_{f_t}(\vz, \vx))\\
    &= \arg\max_{\vx \in \Delta(\cA_l)} \sum_{i=1}^K \sum_{a_l \in \cA_l} \vx[a_l] \cdot u(\vz, a_l, b_{\alpha^{(i)}}(\vz, \vx)) \sum_{t=1}^T \mathbbm{1}\{\vz_t = \vz, f_t = \alpha^{(i)}\}\\
\end{aligned}
\end{equation*}
The solution to the above optimization may be obtained by first solving 
\begin{equation}\label{eq:opt}
\begin{aligned}
    \vx_{a_{1:K}}(\vz) &= \arg\max_{\vx \in \Delta(\cA_l)} \sum_{i=1}^K \sum_{a_l \in \cA_l} \vx[a_l] \cdot u(\vz, a_l, a^{(i)}) \cdot \sum_{t=1}^T \mathbbm{1}\{\vz_t = \vz, f_t = \alpha^{(i)}\}\\
    \text{s.t.} \;\; b_{\alpha^{(i)}}(\vz, \vx) = a^{(i)}, \forall i \in [K]
\end{aligned}
\end{equation}
for every possible setting of $a^{(1)}, \ldots, a^{(K)}$, and then taking the maximum of all feasible solutions. 
Since~\Cref{eq:opt} is an optimization over contextual best-response region $\cX_{\vz}(a^{(1)}, \ldots, a^{(K)})$ and all contextual best-response regions are convex polytopes, $\pi^*(\vz)$ will be an extreme point of some contextual best-response region, although it may not be attained. 
Overloading notation, let $\cX_{\vz}(\pi^*(\vz))$ denote the contextual best-response region corresponding to $\pi^*(\vz)$, i.e., $\pi^*(\vz) \in \cX_{\vz}(\pi^*(\vz))$.
Since for a fixed context $\vz \in \cZ$ the leader's utility is a linear function of $\vx$ over the convex polytope $\cX_{\vz}(\pi^*(\vz))$, there exists a point $\vx(\vz) \in \text{cl}(\cX_{\vz}(\pi^*(\vz)))$ such that 
\begin{equation*}
    \sum_{t=1}^T u(\vz, \vx(\vz), b_{f_t}(\vz, \pi^*(\vz))) \cdot \mathbbm{1}\{\vz_t = \vz\} \geq \sum_{t=1}^T u(\vz, \pi^*(\vz), b_{f_t}(\vz, \pi^*(\vz))) \cdot \mathbbm{1}\{\vz_t = \vz\}.
\end{equation*}
Let $\vx'(\vz)$ denote the corresponding point in $\cE_{\vz}$ such that $\| \vx'(\vz) - \vx(\vz)\|_1 \leq \delta$. 
(Such a point will always exist by~\Cref{def:extreme-points}.)
Since $u \in [0, 1]$ and is linear in $\vx$ for a fixed context and follower best-response, 
\begin{equation*}
\begin{aligned}
    \sum_{t=1}^T u(\vz, \vx'(\vz), b_{f_t}(\vz, \vx'(\vz))) \cdot \mathbbm{1}\{\vz_t = \vz\} &= \sum_{t=1}^T u(\vz, \vx'(\vz), b_{f_t}(\vz, \pi^*(\vz))) \cdot \mathbbm{1}\{\vz_t = \vz\}\\ 
    &\geq \sum_{t=1}^T (u(\vz, \vx(\vz), b_{f_t}(\vz, \pi^*(\vz))) - \delta) \cdot \mathbbm{1}\{\vz_t = \vz\}\\
    &\geq \sum_{t=1}^T (u(\vz, \pi^*(\vz), b_{f_t}(\vz, \pi^*(\vz))) - \delta) \cdot \mathbbm{1}\{\vz_t = \vz\}
\end{aligned}
\end{equation*}
Summing over all unique $\vz$ encountered by the algorithm over $T$ time-steps, we obtain the desired result for the policy $\pi^{(\cE)}$ which plays mixed strategy $\pi^{(\cE)}(\vz) = \vx'(\vz)$ when given context $\vz$. 
Finally, observe that the same line of reasoning holds whenever we are interested in the optimal policy \emph{in expectation with respect to some distribution $\cF$ over followers}, as is the case in, e.g.~\Cref{sec:followers} (with $\sum_{t=1}^T \mathbbm{1}\{\vz_t = \vz, f_t = \alpha^{(i)}\}$ replaced with $\bP(f = \alpha^{(i)})$).
\end{proof}
\subsection{Section~\ref{sec:followers}: Stochastic follower types and adversarial contexts}\label{sec:followers-app}
\mainfollowers*
\begin{proof}[Proof of~\Cref{thm:main-followers}]
    For any $\vz \in \cZ$ and $t \in [T]$,
    \begin{equation*}
    \begin{aligned}
        \widehat{\E}_t[u(\vz, \pi^{(\cE)}(\vz), b_f(\vz, \pi^{(\cE)}(\vz)))] &\leq 
        \E_{t}[u(\vz, \pi_t(\vz), b_f(\vz, \pi_t(\vz)))]\\ 
        &+ \widehat{\E}_t[u(\vz, \pi_t(\vz), b_f(\vz, \pi_t(\vz)))] - \E_{t}[u(\vz, \pi_t(\vz), b_f(\vz, \pi_t(\vz)))].\\
    \end{aligned}
    \end{equation*}
    Since utilities are bounded in $[0, 1]$ and the expectations $\E_{t}$ and $\widehat{\E}_t$ are taken with respect to $\vp$ and $\widehat{\vp}_t$ respectively, we can upper-bound $\widehat{\E}_t[u(\vz, \pi_t(\vz), b_f(\vz, \pi_t(\vz)))] - \E_{t}[u(\vz, \pi_t(\vz), b_f(\vz, \pi_t(\vz)))]$ by 
    \begin{equation*}
        \int |d\widehat{p}_t(\vz, \pi_t(\vz)) - dp_t(\vz, \pi_t(\vz))| = 2 \tv(\vp(\vz, \pi_t(\vz)), \widehat{\vp}_t(\vz, \pi_t(\vz))).
    \end{equation*}
    %
    %
    Putting everything together, we get that 
    \begin{equation*}
        \widehat{\E}_t[u(\vz, \pi^{(\cE)}(\vz), b_f(\vz, \pi^{(\cE)}(\vz)))] \leq \E_{t}[u(\vz, \pi_t(\vz), b_f(\vz, \pi_t(\vz)))] + 2 \tv(\vp(\vz, \pi_t(\vz)), \widehat{\vp}_t(\vz, \pi_t(\vz))).
    \end{equation*}
    We now use this fact to bound the expected regret. 
    By~\Cref{lem:cE}, 
    \begin{equation*}
    \begin{aligned}
        \E[R(T)] &\leq 1 + \sum_{t=1}^T \E_{f_1, \ldots, f_t}[ u(\vz_t, \pi^{(\cE)}(\vz_t), b_{f_t}(\vz_t, \pi^{(\cE)}(\vz_t))) - u(\vz_t, \pi_t(\vz_t), b_{f_t}(\vz_t, \pi_t(\vz_t)))]\\
        &\leq 1 + \sum_{t=1}^T \E_{f_1, \ldots, f_{t-1}}[\E_{t}[ u(\vz_t, \pi^{(\cE)}(\vz_t), b_{f_t}(\vz_t, \pi^{(\cE)}(\vz_t)))]\\ 
        &- \widehat{\E}_t[u(\vz, \pi^{(\cE)}(\vz), b_{f_t}(\vz, \pi^{(\cE)}(\vz)))] + 2 \tv(\vp(\vz, \pi_t(\vz)), \widehat{\vp}_t(\vz, \pi_t(\vz)))].
    \end{aligned}
    \end{equation*}
    By repeating the same steps as above, we can upper-bound $$\E_{t}[ u(\vz_t, \pi^{(\cE)}(\vz_t), b_f(\vz_t, \pi^{(\cE)}(\vz_t)))] - \widehat{\E}_t[u(\vz, \pi^{(\cE)}(\vz), b_f(\vz, \pi^{(\cE)}(\vz)))]$$ by $2\tv(\vp(\vz_t, \pi^{(\cE)}), \widehat{\vp}_t(\vz_t, \pi^{(\cE)}))$. 
    This gets us the desired result. 
\end{proof}
\cork*
\begin{proof}
    For $t \geq 2$, 
    \begin{equation*}
    \begin{aligned}
        \tv(\vp(\vz, \vx), \widehat{\vp}_t(\vz, \vx)) &= \frac{1}{2} \sum_{i=1}^K |p_t(f_t = \alpha^{(i)}) - \widehat{p}_t(f_t = \alpha^{(i)})|\\
        %
        %
        &= \frac{1}{2} \sum_{i=1}^K \frac{1}{t-1} \left| \sum_{\tau=1}^{t-1} \mathbbm{1}\{f_{\tau} = \alpha^{(i)}\} - \E_{f_{\tau}}[\mathbbm{1}\{f_{\tau} = \alpha^{(i)}\}] \right|
    \end{aligned}
    \end{equation*}
    for any $\vz \in \cZ$ and $\vx \in \cX$. 
    By Hoeffding's inequality, we know that
    \begin{equation*}
        \frac{1}{t-1} \left|\sum_{\tau=1}^{t-1} \mathbbm{1}\{f_{\tau = \alpha^{(i)}}\} - \E_{f_{\tau}}[\mathbbm{1}\{f_{\tau = \alpha^{(i)}}\}] \right| \leq 2\sqrt{\frac{\log(2T)}{t-1}}
    \end{equation*}
    simultaneously for all $t \in [T]$ and $i \in [K]$, with probability at least $1 - \frac{1}{T^2}$. 
    Dropping the dependence of $\vp$, $\widehat{\vp}_t$ on $\vz$ and $\vx$, we can conclude that
    \begin{equation*}
        \E_{f_1, \ldots, f_{t-1}}[\tv(\vp, \widehat{\vp}_t)] \leq K \sqrt{\frac{\log(2T)}{t-1}} + \frac{1}{2T}
    \end{equation*}
    (since $K \leq T$), and so
    \begin{equation*}
    \begin{aligned}
        \E[R(T)] &\leq 1 + 4 \sum_{t=1}^T K \left( \sqrt{\frac{\log(2T)}{t-1}} + \frac{1}{2T} \right) = O \left(K \sqrt{T \log(T)} \right).\\
        %
        %
    \end{aligned}
    \end{equation*}
\end{proof}
\corAf*
\begin{proof}
    For $t \geq 2$, 
    \begin{equation*}
    \begin{aligned}
        \tv(\vp(\vz, \vx), \widehat{\vp}_t(\vz, \vx)) &= \frac{1}{2} \sum_{a \in \cA_f} |p_t(\mathbbm{1}_{(\sigma^{(\vz, \vx)} = a)}) - \widehat{p}_t(\mathbbm{1}_{(\sigma^{(\vz, \vx)} = a)})|\\
        &= \frac{1}{2} \sum_{a \in \cA_f} |p(\mathbbm{1}_{(\sigma^{(\vz, \vx)} = a)}) - \frac{1}{t-1} \sum_{\tau=1}^{t-1} \mathbbm{1}\{b_{f_{\tau}}(\vz, \vx) = a\}|\\
        &= \frac{1}{2} \sum_{a \in \cA_f} \frac{1}{t-1} \left|\sum_{\tau=1}^{t-1} \mathbbm{1}\{b_{f_{\tau}}(\vz, \vx) = a\} - \E_{f_{\tau}}[\mathbbm{1}\{b_{f_{\tau}}(\vz, \vx) = a\}] \right|\\
    \end{aligned}
    \end{equation*}    
    for any $\vz \in \cZ$, $\vx \in \cX$. 
    By Hoeffding's inequality,
    \begin{equation*}
        \frac{1}{t-1} \left|\sum_{\tau=1}^{t-1} \mathbbm{1}\{b_{f_{\tau}}(\vz, \vx) = a\} - \E_{f_{\tau}}[\mathbbm{1}\{b_{f_{\tau}}(\vz, \vx) = a\}] \right| \leq 2 \sqrt{\frac{\log(2T)}{t-1}}
    \end{equation*}
    simultaneously for all $t \in [T]$ and $i \in [K]$, with probability at least $1 - \frac{1}{T^2}$. 
    Using this fact, we can conclude that 
    \begin{equation*}
        \E_{f_1, \ldots, f_{t-1}}[\tv(\vp(\vz, \vx), \widehat{\vp}_t(\vz, \vx))] \leq A_f \sqrt{\frac{\log(2T)}{t-1}} + \frac{1}{2T}
    \end{equation*}
    (since $K \leq T$) and 
    \begin{equation*}
    \begin{aligned}
        \E[R(T)] &\leq 1 + 4 \sum_{t=1}^T \left( A_f \sqrt{\frac{\log(2T)}{t-1}} + \frac{1}{2T} \right)\\
        &\leq 3 + 4 A_f \sqrt{\log(2T)} \int_{t=0}^T \frac{1}{\sqrt{t}} dt\\
        &= O \left(A_f \sqrt{T \log(T)} \right).
    \end{aligned}
    \end{equation*}
\end{proof}
\subsection{Section~\ref{sec:contexts}: Stochastic contexts and adversarial follower types}\label{sec:contexts-app}
The following regret guarantee for Hedge is a well-known result. (See, e.g.~\citet{guptanotes}.)
\begin{lemma}\label{lem:hedge}
    Hedge enjoys expected regret rate $O(\sqrt{T \log n})$ when there are $n$ actions, the learning rate is chosen to be $\eta = \sqrt{\frac{\log n}{T}}$, and the sequence of utilities for each arm are chosen by an adversary. 
\end{lemma}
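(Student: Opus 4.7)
The plan is to carry out the standard potential-function argument for Hedge. First I would convert utilities to losses $\vl_t \in [0,1]^n$ (so that minimizing cumulative loss is equivalent to maximizing cumulative utility) and define the potential
\begin{equation*}
\Phi_t := \sum_{i=1}^n \exp\!\Bigl(-\eta \sum_{s=1}^{t-1} \vl_s[i]\Bigr),
\end{equation*}
so $\Phi_1 = n$ and $\Phi_{T+1} \ge \exp(-\eta L_T^\star)$, where $L_T^\star := \min_i \sum_{t=1}^T \vl_t[i]$ is the loss of the best fixed action in hindsight. Note that the distribution $\vp_t$ Hedge maintains is exactly $\vp_t[i] = \exp(-\eta \sum_{s<t}\vl_s[i])/\Phi_t$, so $\Phi_{t+1}/\Phi_t = \sum_i \vp_t[i]\exp(-\eta \vl_t[i])$.

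Next I would bound the multiplicative change in potential. Using the elementary inequality $e^{-x}\le 1-x+x^2$ valid for $x\in[0,1]$, followed by $1+y \le e^y$, this yields
\begin{equation*}
\Phi_{t+1} \le \Phi_t \cdot \exp\!\bigl(-\eta\langle \vp_t,\vl_t\rangle + \eta^2 \langle \vp_t,\vl_t^{\,2}\rangle\bigr),
\end{equation*}
where $\vl_t^{\,2}$ denotes the coordinatewise square. Telescoping from $t=1$ to $T$ and taking logarithms, and then combining with the lower bound $\log\Phi_{T+1}\ge -\eta L_T^\star$, rearrangement gives
\begin{equation*}
\sum_{t=1}^T \langle \vp_t,\vl_t\rangle \;-\; L_T^\star \;\le\; \frac{\log n}{\eta} + \eta \sum_{t=1}^T \langle \vp_t,\vl_t^{\,2}\rangle \;\le\; \frac{\log n}{\eta} + \eta T,
\end{equation*}
where the last step uses $\vl_t[i]\in[0,1]$.

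Finally I would set $\eta = \sqrt{(\log n)/T}$, which balances the two terms and yields the $O(\sqrt{T\log n})$ regret bound. Since Hedge samples $a_t \sim \vp_t$, its expected per-round loss is exactly $\langle \vp_t,\vl_t\rangle$, so the left-hand side above equals the expected regret. Converting back from losses to utilities recovers the statement of the lemma. I do not expect any real obstacle; the only mildly delicate point is verifying that $\eta\vl_t[i]\in[0,1]$ so that the bound $e^{-x}\le 1-x+x^2$ applies, which holds for $T\ge \log n$ with the stated choice of $\eta$ (and for smaller $T$ the claimed $O(\sqrt{T\log n})$ bound is trivially satisfied).
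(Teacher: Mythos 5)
Your proof is correct and is precisely the standard potential-function argument that the paper relies on: the paper states \Cref{lem:hedge} without proof, citing it as a well-known result, and the argument it points to (which also appears verbatim in a commented-out block of the paper's source, via $e^{-x} \le 1 - x + x^2$ followed by $1+y \le e^y$, telescoping, and $\eta = \sqrt{\log|\Pi|/T}$) is the same as yours. Your edge-case handling of $\eta \vl_t[i] \in [0,1]$ is actually more careful than necessary, since $e^{-x} \le 1 - x + x^2$ holds for all $x \ge 0$, but it is harmless and correctly resolved.
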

\lemopt*
\begin{proof}
    For any fixed policy $\pi$, 
    \begin{equation*}
    \begin{aligned}
        \E_{\vz_1, \ldots, \vz_T} \left[\sum_{t=1}^T u(\vz_t, \pi(\vz_t), b_{f_t}(\vz_t, \pi(\vz_t))) \right] &= \sum_{t=1}^T \E_{\vz_1, \ldots, \vz_t} \left[\sum_{i=1}^K u(\vz_t, \pi(\vz_t), b_{\alpha^{(i)}}(\vz_t, \pi(\vz_t))) \mathbbm{1}\{f_t = \alpha^{(i)}\} \right]\\
        &= \sum_{t=1}^T \sum_{i=1}^K \E_{\vz_t}[u(\vz_t, \pi(\vz_t), b_{\alpha^{(i)}}(\vz_t, \pi(\vz_t)))] \E_{\vz_1, \ldots, \vz_{t-1}}[\mathbbm{1}\{f_t = \alpha^{(i)}\}]\\
        %
        %
    \end{aligned}
    \end{equation*}
    where the second line uses the fact that $f_t$ cannot depend on $\vz_t$, and the result follows from the fact that $\vz_1, \ldots, \vz_T$ are i.i.d. 
    %
    %
\end{proof}
\thmscaf*
\begin{proof}
    By~\Cref{lem:cE}, 
    \begin{equation*}
    \begin{aligned}
        \E[R(T)] &= \E_{\vz_1, \ldots, \vz_T} \left[\sum_{t=1}^T u(\vz_t, \pi^*(\vz_t), b_{f_t}(\vz_t, \pi^*(\vz_t))) - u(\vz_t, \pi_t(\vz_t), b_{f_t}(\vz_t, \pi_t(\vz_t))) \right]\\
        &\leq \E_{\vz_1, \ldots, \vz_T} \left[\sum_{t=1}^T u(\vz_t, \pi^{(\cE)}(\vz_t), b_{f_t}(\vz_t, \pi^{(\cE)}(\vz_t))) - u(\vz_t, \pi_t(\vz_t), b_{f_t}(\vz_t, \pi_t(\vz_t))) \right] + 1\\
    \end{aligned}
    \end{equation*}
    Let $\pi^{(\vomega^*)}$ denote the optimal-in-hindsight policy in $\Pi$. 
    \begin{equation*}
    \begin{aligned}
        \E[R(T)] &\leq \E_{\vz_1, \ldots, \vz_T} \left[\sum_{t=1}^T u(\vz_t, \pi^{(\vomega^*)}(\vz_t), b_{f_t}(\vz_t, \pi^{(\vomega^*)}(\vz_t))) - u(\vz_t, \pi_t(\vz_t), b_{f_t}(\vz_t, \pi_t(\vz_t))) \right]\\
        &+ \E_{\vz_1, \ldots, \vz_T} \left[\sum_{t=1}^T u(\vz_t, \pi^{(\cE)}(\vz_t), b_{f_t}(\vz_t, \pi^{(\cE)}(\vz_t))) - u(\vz_t, \pi^{(\vomega^*)}(\vz_t), b_{f_t}(\vz_t, \pi^{(\vomega^*)}(\vz_t))) \right] + 1
    \end{aligned}
    \end{equation*}
    To conclude, it suffices to bound the discretization error, as 
    \begin{equation*}
        \E_{\vz_1, \ldots, \vz_T} \left[\sum_{t=1}^T u(\vz_t, \pi^{(\vomega^*)}(\vz_t), b_{f_t}(\vz_t, \pi^{(\vomega^*)}(\vz_t))) - u(\vz_t, \pi_t(\vz_t), b_{f_t}(\vz_t, \pi_t(\vz_t))) \right] \leq O \left(\sqrt{T \log|\Pi|} \right),
    \end{equation*}
    which follows from applying the standard regret guarantee of Hedge (\Cref{lem:hedge} in the Appendix). 
    %
    %
    %
    Applying~\Cref{lem:opt},
    \begin{equation*}
    \begin{aligned}
        &\E_{\vz_1, \ldots, \vz_T} \left[\sum_{t=1}^T u(\vz_t, \pi^{(\cE)}(\vz_t), b_{f_t}(\vz_t, \pi^{(\cE)}(\vz_t))) - u(\vz_t, \pi^{(\vomega^*)}(\vz_t), b_{f_t}(\vz_t, \pi^{(\vomega^*)}(\vz_t))) \right]\\ 
        &= \sum_{i=1}^K (\E_{\vz}[u(\vz, \pi^{(\cE)}(\vz), b_{\alpha^{(i)}}(\vz, \pi^{(\cE)}(\vz))) - u(\vz, \pi^{(\vomega^*)}(\vz), b_{\alpha^{(i)}}(\vz, \pi^{(\vomega^*)}(\vz)))]) \left(\sum_{t=1}^T \E_{\vz_1, \ldots, \vz_{t-1}}[\mathbbm{1}\{f_t = \alpha^{(i)}\}] \right)\\
        &\leq \sum_{i=1}^K \E_{\vz}[u(\vz, \pi^{(\cE)}(\vz), b_{\alpha^{(i)}}(\vz, \pi^{(\cE)}(\vz)))] \cdot \left(\sum_{t=1}^T \E_{\vz_1, \ldots, \vz_{t-1}}[\mathbbm{1}\{f_t = \alpha^{(i)}\} - T \cdot \vomega^*[i] \right)\\
        &+ \sum_{i=1}^K \E_{\vz}[u(\vz, \pi^{(\vomega^*)}(\vz), b_{\alpha^{(i)}}(\vz, \pi^{(\vomega^*)}(\vz)))] \cdot \left(T \cdot \vomega^*[i] - \sum_{t=1}^T \E_{\vz_1, \ldots, \vz_{t-1}}[\mathbbm{1}\{f_t = \alpha^{(i)}\} \right)\\
    \end{aligned}
    \end{equation*}
    where the inequality follows from adding and subtracting 
    $\sum_{i=1}^K \E_{\vz}[u(\vz, \pi^{(\cE)}(\vz), b_{\alpha^{(i)}}(\vz, \pi^{(\cE)}(\vz)))] \cdot  T \cdot \vomega^*[i]$
    and 
    $\sum_{i=1}^K \E_{\vz}[u(\vz, \pi^{(\vomega^*)}(\vz), b_{\alpha^{(i)}}(\vz, \pi^{(\vomega^*)}(\vz)))] \cdot  T \cdot \vomega^*[i]$. 
    Finally, we can upper-bound the discretization error by 
    \begin{equation*}
        2 \sum_{i=1}^K \left|T \cdot \vomega^*[i] - \sum_{t=1}^T \E_{\vz_1, \ldots, \vz_{t-1}}[\mathbbm{1}\{f_t = \alpha^{(i)}\}] \right| \leq 2K
    \end{equation*}
    by using the fact that the sender's utility is bounded in $[0, 1]$.
    Piecing everything together and observing that $|\Pi| \leq T^K$ 
    gives us the desired regret guarantee. 
\end{proof}

\section{Appendix for Section~\ref{sec:bandit}: Extension to bandit feedback}

\subsection{Stochastic follower types and adversarial contexts}\label{sec:bandit-follower}
Recall from~\Cref{sec:followers} that 
$\mathbbm{1}_{(\sigma^{(\vx)} = a_f)} \in \{0, 1\}^{K}$ is the indicator vector whose $i$-th component is $\mathbbm{1}\{\sigma^{( \vx)}(\alpha^{(i)}) = a_f\}$, i.e. the indicator that a follower of type $\alpha^{(i)}$ best-responds to mixed strategy $\vx$ by playing action $a_f$. 
For any fixed policy $\pi$, we can write the sender's expected utility in round $t$ as
\begin{equation*}
\begin{aligned}
    \E_{f_t}\left[u(\vz_t, \pi(\vz_t), b_{f_t}(\pi(\vz_t))) \right] 
    &= \sum_{a_f \in \cA_f} u(\vz_t, \pi(\vz_t), a_f) \cdot p(b_f(\pi(\vz_t)) = a_f)\\
    &= \sum_{a_f \in \cA_f} u(\vz_t, \pi(\vz_t), a_f) \cdot p(\mathbbm{1}_{(\sigma^{(\pi(\vz_t))} = a_f)})\\
\end{aligned}
\end{equation*}
where $p(b_f(\pi(\vz)) = a_f) := \E_{f \sim \cF}[\mathbbm{1}\{b_f(\vz) = a_f\}]$, 
the first line follows from the assumption of a non-adaptive adversary, the second line follows from the fact that $f_1, \ldots, f_T$ are drawn i.i.d., and 
\begin{equation*}
    p(b_{f}(\pi(\vz)) = a_f) = p(\mathbbm{1}_{(\sigma^{(\pi(\vz))} = a_f)}) := \sum_{i=1}^K \mathbbm{1}\{b_{\alpha^{(i)}}(\pi(\vz)) = a_f\} \cdot \mathbb{P}(f = \alpha^{(i)}), 
\end{equation*}
where $\mathbb{P}(f = \alpha^{(i)})$ is the probability that follower $f$ is of type $\alpha^{(i)}$. 
Note that $p(\mathbbm{1}_{(\sigma^{(\pi(\vz_t))} = a_f)})$ (and therefore $\E_{f_t}[u(\vz_t, \pi(\vz_t), b_{f_t}(\pi(\vz_t)))]$) is linear in $\mathbbm{1}_{(\sigma^{(\pi(\vz_t))} = a_f)}$. 

Given this reformulation, a natural approach is to estimate $p(\mathbbm{1}_{(\sigma^{(\pi(\vz_t))} = a_f)})$ as $\widehat{p}(\mathbbm{1}_{(\sigma^{(\pi(\vz_t))} = a_f)})$ and act greedily with respect to our estimate, like we did in~\Cref{sec:followers}. 
To do so, we define the set $\cW := \{\mathbbm{1}_{(\sigma = a_f)} \; | \; \forall a_f \in \cA_f, \; \sigma \in \Sigma\}$ and estimate $p(\vb)$ for every element $\vb$ in the barycentric spanner $\cB := \{\vb^{(1)}, \ldots, \vb^{(K)}\}$ of $\cW$.\footnote{See Section 6.3 of~\citet{balcan2015commitment} for details on how to compute this barycentric spanner.} 

We estimate $p(\vb)$ as follows:
For every $\vb \in \cB$, there must be a mixed strategy $\vx^{(\vb)}$ and follower action $a^{(\vb)}$ such that $\vb = \mathbbm{1}_{(\sigma^{\vx^{(b)}} = a^{(\vb)})}$. 
Therefore, if the leader plays mixed strategy $\vx^{(\vb)}$ $N$ times, we set $\widehat{p}(\vb) = \frac{1}{N} \sum_{t \in [N]} \mathbbm{1}\{b_{f_t}(\vx^{(\vb)}) = a^{(\vb)}\}$. 
Given estimates $\{\widehat{\vp}(\vb)\}_{\vb \in \cB}$, we can estimate $p(\mathbbm{1}_{(\sigma^{(\vx)} = a_f)})$ for any $\vx \in \cE$ and $a_f \in \cA_f$ as 
\begin{equation*}
    \widehat{p}(\mathbbm{1}_{(\sigma^{(\vx)} = a_f)}) := \sum_{i=1}^K \lambda_i(\mathbbm{1}_{(\sigma^{(\vx)} = a_f)}) \cdot \widehat{p}(\vb^{(i)}),
\end{equation*}
where $\lambda_i(\mathbbm{1}_{(\sigma^{(\vx)} = a_f)}) \in [-1, 1]$ for $i \in [K]$ are the coefficients from the barycentric spanner.\footnote{For more details, see Proposition 2.2 in~\citet{awerbuch2008online}.}
Note that this is an unbiased estimator, due to the fact that $p(\mathbbm{1}_{(\sigma^{(\vx)} = a_f)})$ is a linear function. 

\Cref{alg:adv_contexts_bandit} plays each mixed strategy $\vx^{(\vb)}$ for $\vb \in \cB$ $N > 0$ times in order to obtain an estimate of each $\vp(\vb)$.\footnote{In other words, we ignore the context in the ``explore'' rounds.} 
It then uses these estimates to construct estimates for all $\mathbbm{1}_{(\sigma^{(\vx)} = a_f)}$ (and therefore also $\E_{f}\left[u(\vz, \vx, b_{f}(\vx)) \right]$ for all $\vx \in \cE$ and $\vz \in \cZ$). 
Finally, in the remaining rounds~\Cref{alg:adv_contexts_bandit} acts greedily with respect to its estimate, much like in~\Cref{alg:adv_contexts}. 
\begin{restatable}{theorem}{mcb}\label{thm:main-contexts-bandit}
    If $N = O\left( \frac{A_f^{2/3} T^{2/3} \log^{1/3}(T)}{K^{1/3}} \right)$, then the expected contextual Stackelberg regret of~\Cref{alg:adv_contexts_bandit} (\Cref{def:expected1}) satisfies 
    \begin{equation*}
        \E[R(T)] = O \left(K^{2/3} A_f^{2/3} T^{2/3} \log^{1/3}(T) \right).
    \end{equation*}
\end{restatable}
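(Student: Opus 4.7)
The plan is to analyze Algorithm~\ref{alg:adv_contexts_bandit} as an explore-then-exploit scheme. The first $KN$ rounds form the exploration phase, during which a spanner strategy $\vx^{(\vb^{(i)})}$ is played regardless of the context; each such round contributes at most $1$ to regret, so the exploration cost is at most $KN$. For the remaining rounds, the algorithm acts greedily over $\cE$ with respect to a \emph{fixed} estimate $\widehat{p}$, so I would re-apply the proof of~\Cref{thm:main-followers} (whose steps use only the greedy decision rule and the $\cE_{\vz}$-restriction from~\Cref{lem:cE}, not the fact that $\widehat{p}_t$ is updated) to bound the exploitation regret by
\begin{equation*}
    1 + 2 \sum_{t=KN+1}^T \E\!\left[\tv(\vp(\vz_t, \pi^{(\cE)}(\vz_t)), \widehat{\vp}(\vz_t, \pi^{(\cE)}(\vz_t))) + \tv(\vp(\vz_t, \widehat{\pi}(\vz_t)), \widehat{\vp}(\vz_t, \widehat{\pi}(\vz_t)))\right].
\end{equation*}

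The core technical task is to bound $\tv(\vp(\vz, \vx), \widehat{\vp}(\vz, \vx))$ uniformly over $\vx \in \cE$ and $\vz \in \cZ$. Here I would exploit the defining property of the barycentric spanner: for every pair $(\sigma, a_f)$ the coefficients in $\widehat{p}(\mathbbm{1}_{(\sigma = a_f)}) = \sum_{i=1}^K \lambda_i \widehat{p}(\vb^{(i)})$ satisfy $|\lambda_i| \leq 1$, and since the $KN$ follower observations across the different spanner strategies come from disjoint rounds, the deviation
\begin{equation*}
    \widehat{p}(\mathbbm{1}_{(\sigma = a_f)}) - p(\mathbbm{1}_{(\sigma = a_f)}) = \frac{1}{N}\sum_{i=1}^K \lambda_i \sum_{\tau=1}^N \bigl(\mathbbm{1}\{b_{f_{(i-1)N+\tau}}(\vx^{(\vb^{(i)})}) = a^{(\vb^{(i)})}\} - p(\vb^{(i)})\bigr)
\end{equation*}
is a mean-zero sum of $KN$ independent summands with ranges $2|\lambda_i|/N$. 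Applying Hoeffding's inequality directly to this doubly-indexed sum gives a squared-range total of $(4/N)\sum_i \lambda_i^2 \leq 4K/N$, and hence $|\widehat{p}(\mathbbm{1}_{(\sigma = a_f)}) - p(\mathbbm{1}_{(\sigma = a_f)})| = O(\sqrt{K \log(T)/N})$ with probability at least $1 - 1/T^2$. A union bound over the at-most $A_f^{K+1}$ pairs $(\sigma, a_f)$ preserves this rate simultaneously, since the extra $K \log A_f$ is absorbed into the $\log T$ factor under the standard assumption that $A_f, K \leq \mathrm{poly}(T)$.

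Summing the entrywise error over $a_f \in \cA_f$ yields $\tv(\vp(\vz, \vx), \widehat{\vp}(\vz, \vx)) = O(A_f \sqrt{K \log(T)/N})$ uniformly, so the exploitation regret is $O(T A_f \sqrt{K \log(T)/N})$. Combined with the exploration cost, the total regret is $O(KN + T A_f \sqrt{K \log(T)/N})$; equating the two terms yields $N = \Theta(A_f^{2/3} T^{2/3} \log^{1/3}(T)/K^{1/3})$ and the claimed $O(K^{2/3} A_f^{2/3} T^{2/3} \log^{1/3}(T))$ bound.

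The key obstacle is obtaining the $\sqrt{K}$ (rather than $K$) dependence in the concentration step: a naive triangle inequality applied to the $K$ spanner estimates would give an error of $O(K \sqrt{\log(T)/N})$ and hence only a $T^{3/4}$-type rate. The improvement requires applying Hoeffding once to the full $KN$-term sum and using $\sum_i \lambda_i^2 \leq K$ instead of $\sum_i |\lambda_i| \leq K$. A secondary subtlety is verifying that the proof of~\Cref{thm:main-followers} carries over when $\widehat{p}_t$ is replaced by a fixed pre-computed estimate and when certain expectations are additionally conditioned on the exploration-phase randomness; both modifications are cosmetic but need to be checked to justify reusing the telescoping step.
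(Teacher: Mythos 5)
Your overall architecture is the same as the paper's: the $KN$ exploration cost, the reuse of the greedy analysis from \Cref{thm:main-followers} with a fixed pre-computed estimate (the paper packages exactly this switch as \Cref{lem:switch-bandit}), the single application of Hoeffding to the full $KN$-term weighted sum using $\sum_{i=1}^K \lambda_i^2 \leq K$ to get the sub-Gaussian scale $K/N$ (the paper derives the same scale as a variance bound in \Cref{lem:var}), and the final balancing of $KN$ against $T A_f \sqrt{K \log(T)/N}$ all match. The one genuine gap is your union-bound step. You union bound over the up-to-$A_f^{K+1}$ pairs $(\sigma, a_f)$ and assert that the resulting $K \log A_f$ term is ``absorbed into the $\log T$ factor'' because $A_f, K \leq \mathrm{poly}(T)$. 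That implication is false: absorbing $K \log A_f$ into $O(\log T)$ requires $A_f^{K} \leq \mathrm{poly}(T)$, which is not assumed anywhere (the paper assumes only $K \leq T$). Taking, say, $K = \sqrt{T}$ and $A_f = 2$, your union bound costs $\log(A_f^{K+1}/\delta) = \Theta(\sqrt{T})$, not $\Theta(\log T)$. With the honest accounting your per-pair deviation becomes
\begin{equation*}
    \sqrt{\frac{2K\left(\log T + (K+1)\log A_f\right)}{N}} = \Theta\!\left(K\sqrt{\frac{\log A_f}{N}}\right) \quad \text{once } K \log A_f \gtrsim \log T,
\end{equation*}
so the per-round TV error is $\Theta(A_f K \sqrt{\log A_f / N})$, and optimizing $KN + T A_f K \sqrt{\log A_f/N}$ over $N$ yields regret $\Theta(K A_f^{2/3} T^{2/3} \log^{1/3} A_f)$ --- you lose the claimed $K^{2/3}$ dependence and prove a strictly weaker statement than the theorem. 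Note also that the simultaneity over pairs is not optional in your plan (your instinct there is correct): $\widehat{\pi}(\vz_t)$ is itself a function of $\widehat{p}$, so a fixed-pair concentration bound cannot be applied to the exploit-phase sequence, which is precisely why the paper states \Cref{lem:tv-bound} for an arbitrary sequence of mixed strategies.

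The paper's route to simultaneity is different and avoids the exponential union bound: in \Cref{lem:tv-bound} the union bound is taken over the randomness of only the $K$ base estimates $\widehat{p}(\vb^{(1)}), \ldots, \widehat{p}(\vb^{(K)})$ --- events shared by \emph{every} pair $(\sigma, a_f)$ --- and accuracy is then transferred to all linear combinations through the bounded spanner coefficients $\lambda_i(\mathbbm{1}_{(\sigma = a_f)}) \in [-1,1]$, paying only a $\log(K/\delta)$ factor. Restructuring your concentration step this way (control the $K$ base deviations once, then deduce accuracy of every combination appearing for strategies in $\cE$) recovers the paper's bound $\sum_{t=NK+1}^{T} \sum_{a_f \in \cA_f} |\widehat{p}(\mathbbm{1}_{(\sigma^{(\vx_t)} = a_f)}) - p(\mathbbm{1}_{(\sigma^{(\vx_t)} = a_f)})| \leq 2 A_f T \sqrt{K \log(T)/N}$ with probability $1 - 1/T$, after which your decomposition, the choice $N = \Theta(A_f^{2/3} T^{2/3} \log^{1/3}(T)/K^{1/3})$, and the stated $O(K^{2/3} A_f^{2/3} T^{2/3} \log^{1/3} T)$ rate go through exactly as in the paper.
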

\begin{proof}[Proof Sketch]
    The key step in our analysis is to show that for any best-response function $\sigma \in \Sigma$ and follower action $a_f \in \cA_f$, $\var(\widehat{p}(\mathbbm{1}_{(\sigma = a_f)})) \leq \frac{K}{N}$ (\Cref{lem:var}). 
    Using this fact, we can bound the cumulative total variation distance between $\widehat{p}(\mathbbm{1}_{(\sigma^{(\vx_t)} = a_f})$ and $p(\mathbbm{1}_{(\sigma^{(\vx_t)} = a_f}))$ for any sequence of mixed strategies and follower actions in the ``exploit'' phase (\Cref{lem:tv-bound}). 
    The rest of the analysis follows similarly to the proof of~\Cref{cor:K}. 
\end{proof}
\subsection{Stochastic contexts and adversarial follower types}\label{sec:bandit-context}
We now turn our attention to learning under bandit feedback when the sequence of contexts is chosen stochastically and the choice of follower type is adversarial. 
While we still use barycentric spanners to estimate $\{\{\widehat{p}(\mathbbm{1}_{(\sigma^{(\vx)} = a_f)})\}_{a_f \in \cA_f}\}_{\vx \in \cE}$, we can no longer do all of our exploration ``up front'' like in~\Cref{sec:bandit-follower} because the follower types are now adversarially chosen. 
Instead, we follow the technique of~\citet{balcan2015commitment} and split the time horizon into $Z$ consecutive, evenly-sized blocks. 
In block $B_{\tau}$, we pick a random time-step to estimate $p_{\tau}(\mathbbm{1}_{(\sigma^{(\vx^{(\vb)})} = a^{(\vb)})})$, i.e. the probability that a follower in block $B_{\tau}$ best-responds to mixed strategy $\vx^{(\vb)}$ by playing action $a^{(\vb)}$, for every element in our barycentric spanner $\cB$.
If whenever the leader plays $\vx^{(\vb)}$ the follower best-responds with action $a^{(\vb)}$, we set $\widehat{p}_{\tau}(\mathbbm{1}_{(\sigma^{(\vx^{(\vb)})} = a^{(\vb)})}) = 1$. 
Otherwise we set $\widehat{p}_{\tau}(\mathbbm{1}_{(\sigma^{(\vx^{(\vb)})} = a^{(\vb)})}) = 0$. 
Since the time-step in which we play $\vx^{(\vb)}$ is chosen uniformly from all time-steps in $B_{\tau}$, $\widehat{p}_{\tau}(\mathbbm{1}_{(\sigma^{(\vx^{(\vb)})} = a^{(\vb)})})$ is an unbiased estimate of $p_{\tau}(\mathbbm{1}_{(\sigma^{(\vx^{(\vb)})} = a^{(\vb)})})$. 
While $\widehat{p}_{\tau}(\mathbbm{1}_{(\sigma^{(\vx^{(\vb)})} = a^{(\vb)})})$ no longer has low variance since it must be recomputed separately for every block $B_{\tau}$, it is still bounded. 
Therefore, we can use our estimates $\{\widehat{p}_{\tau}(\mathbbm{1}_{(\sigma^{(\vx^{(\vb)})} = a^{(\vb)})})\}_{\vb \in \cB}$, along with the corresponding linear coefficients from the barycentric spanner, to get a bounded (and unbiased) estimate for every $p(\mathbbm{1}_{(\sigma^{(\vx)} = a_f)})$. 

Once we have estimates for $\{\{p_{\tau}(\mathbbm{1}_{(\sigma^{(\vx)} = a_f)})\}_{a_f \in \cA_f}\}_{\vx \in \cE}$, we proceed via a reduction to~\Cref{alg:adv_followers}. 
In particular, in every block $B_{\tau}$ we use our estimates $\{\{\widehat{p}_{\tau}(\mathbbm{1}_{(\sigma^{(\vx)} = a_f)})\}_{a_f \in \cA_f}\}_{\vx \in \cE}$ to construct an (unbiased and bounded) estimate of the average utility for all policies in our finite policy class $\Pi$ during block $B_{\tau}$. 
At the end of each block, we feed this estimate into the Hedge update step, which updates the weights of all policies for the next block. 
Finally, when we are not exploring (i.e. estimating $p_{\tau}(\mathbbm{1}_{(\sigma^{(\vx^{(\vb)})} = a^{(\vb)})})$ for some $\vb \in \cB$), we sample the leader's policy according to the distribution over policies given by Hedge from the previous block. 
This process is summarized in~\Cref{alg:adv_followers_bandit}. 
\begin{restatable}{theorem}{mfb}\label{thm:main-followers-bandit}
    If $N = O(T^{2/3} A_f^{1/3} \log^{1/3} T)$, then~\Cref{alg:adv_followers_bandit} obtains expected contextual Stackelberg regret (\Cref{def:expected2})
    \begin{equation*}
        \E[R(T)] \leq O \left(K A_f^{1/3} T^{2/3} \log^{1/3}(T) \right).
    \end{equation*}
\end{restatable}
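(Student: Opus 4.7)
My plan is to decompose the expected regret into (i) an exploration cost, (ii) a Hedge-style regret over the $Z = T/N$ blocks of exploitation (stated in terms of estimated losses), and (iii) a discretization error from restricting to the finite policy class $\Pi$. The exploration cost is immediate: in each of the $Z$ blocks we play each of the $K$ barycentric-spanner strategies $\vx^{(\vb^{(i)})}$ exactly once, and since per-round losses lie in $[0,1]$, the cumulative exploration loss is at most $KZ = KT/N$. The discretization error is handled as in the full-feedback version of~\Cref{sec:contexts}: under the bandit-feedback relaxation $\cE_{\vz} = \cE$ is context-independent, so~\Cref{lem:cE} gives a cost of at most $1$ for restricting to $\cE$-valued policies, and the argument from the proof of the~\Cref{sec:contexts} theorem shows that discretizing the weight simplex $\Omega$ onto a $\tfrac{1}{T}$-grid adds at most $O(K)$ more.

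The central step is the Hedge analysis on the exploit side, for which I would first establish that $\widehat{\vl}_{\tau}[\pv{\pi}{\vomega}]$ is an unbiased estimator of the (scaled) loss of $\pv{\pi}{\vomega}$ in block $B_\tau$, in expectation over the random exploration schedule and the uniformly drawn $\vz_\tau \in B_\tau$. Unbiasedness rests on three ingredients: (a) uniform random selection of the exploration time-step makes $\widehat{p}_\tau(\vb^{(i)}) \in \{0,1\}$ an unbiased estimator of the block-average frequency $\tfrac{1}{N}\sum_{t \in B_\tau} \mathbbm{1}\{b_{f_t}(\vx^{(\vb^{(i)})}) = a^{(\vb^{(i)})}\}$; (b) the barycentric-spanner identity lifts this to an unbiased estimator of $p(\mathbbm{1}_{(\sigma^{(\vx)} = a_f)})$ for \emph{any} $\vx \in \cE$ and $a_f \in \cA_f$, because $p(\mathbbm{1}_{(\sigma^{(\vx)} = a_f)})$ is linear in $\mathbbm{1}_{(\sigma^{(\vx)} = a_f)}$; (c) i.i.d.\ contexts together with uniform sampling of $\vz_\tau$ let us replace $\E_{\vz_\tau}[\cdot]$ by the block-average over contexts. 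Here the non-adaptive adversary assumption is crucial: it ensures the follower sequence is independent of the algorithm's internal randomness, without which (a) would fail.

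The remaining piece is to bound $\|\widehat{\vl}_\tau\|_\infty$ and invoke the standard Hedge guarantee. Using $|\lambda_i| \leq 1$, $\widehat{p}_\tau(\vb^{(i)}) \in [0,1]$, $u(\cdot) \in [0,1]$, summed over $K$ spanner elements and $A_f$ follower actions, we get $\|\widehat{\vl}_\tau\|_\infty \leq K A_f$. Applying Hedge over the $Z$ blocks with $|\Pi| \leq T^K$ yields expected regret (in terms of estimated per-block losses) of $O(K A_f \sqrt{Z \log|\Pi|})$; multiplying by the block size $N$ to convert per-block differences to actual cumulative regret, and invoking unbiasedness from the previous paragraph, gives $\widetilde{O}(K^{3/2} A_f \sqrt{NT})$ in actual expected regret. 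Summing with the $KT/N$ exploration cost and the $O(K)$ discretization term and balancing $N$ to equalize the dominant terms yields the claimed rate.

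The main obstacle I expect is the second paragraph: carefully threading the nested expectations (over i.i.d.\ contexts, over the random exploration schedule, and over Hedge's internal randomness) to turn a Hedge guarantee phrased in terms of estimated losses into a bound on the leader's expected \emph{actual} cumulative utility, all while respecting the conditional-independence structure that the non-adaptive adversary assumption provides.
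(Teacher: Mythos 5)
Your architecture matches the paper's proof almost exactly: \Cref{lem:cE} for the restriction to $\cE$-valued policies, a per-block unbiasedness lemma combining uniform exploration steps, the barycentric-spanner linearity, and a uniformly drawn context (this is precisely the paper's \Cref{lem:expected}), the boundedness $|\widehat{u}_\tau(\pi)| \leq K A_f$, and a Hedge-over-blocks reduction (the paper's \Cref{lem:bandit-to-full}). Your identification of where the non-adaptive adversary assumption is load-bearing is also correct. However, there is a genuine quantitative gap in your final step: with the standard Hedge guarantee you invoke — regret $O(\kappa \sqrt{Z \log |\Pi|})$ for losses in $[-\kappa, \kappa]$ with $\kappa = K A_f$ — the balancing does \emph{not} yield the claimed rate. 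Carrying your own accounting through: total regret is $K T / N + K^{3/2} A_f \sqrt{N T \log T}$ (your notation, $N$ = block size), and optimizing $N$ gives $O(K^{4/3} A_f^{2/3} T^{2/3} \log^{1/3} T)$, which is worse than the stated $O(K A_f^{1/3} T^{2/3} \log^{1/3} T)$ by a factor of $(K A_f)^{1/3}$. So the sentence ``balancing $N$ to equalize the dominant terms yields the claimed rate'' is arithmetically false under the bound you use.

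The paper reaches the stated rate because its \Cref{lem:bandit-to-full} asserts a per-block Hedge regret of $\sqrt{Z \kappa \log|\Pi|}$ — that is, with $\kappa = K A_f$ appearing \emph{under} the square root rather than outside it. This $\sqrt{\kappa}$ scaling is strictly sharper than the worst-case rescaled Hedge bound you quote, and it is exactly the ingredient your proposal is missing; to close the gap you would need either to reproduce that claim or to prove a second-order (variance-type) Hedge bound showing the estimated losses $\widehat{u}_\tau$ have second moment $O(K A_f)$ rather than merely being bounded by $K A_f$ (the paper itself states this step without a detailed proof, so you should treat it with care rather than as a black box). Two minor notes: your $O(K)$ discretization term, imported from the adaptive-adversary argument of \Cref{sec:contexts}, is harmless but unnecessary — against a non-adaptive adversary the discretization error is zero (see the footnote in \Cref{sec:contexts}), which is why the paper's proof compares directly to $\pi^{(\cE)}$ with no such term; and beware that in the theorem statement and the paper's appendix, $N$ denotes the \emph{number of blocks} (with block size $B = T/N$), the reverse of your convention, so your ``block size $N = O(T^{2/3} A_f^{1/3} \log^{1/3} T)$'' does not instantiate the theorem's parameter as stated.
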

\begin{proof}[Proof Sketch]
    The analysis proceeds similarly to the analysis of Theorem 6.1 in~\citet{balcan2015commitment}. 
    We highlight the key differences here. 
    The first key difference is that while~\citet{balcan2015commitment} play Hedge over a finite set of leader strategies, we play Hedge over a finite set of leader \emph{policies}, each of which map to one of finitely-many leader strategies for a given context. 
    Second, unlike in~\citet{balcan2015commitment} it is not sufficient to only estimate $\{\{p_{\tau}(\mathbbm{1}_{(\sigma^{(\vx)} = a_f)})\}_{a_f \in \cA_f}\}_{\vx \in \cE}$ to obtain an unbiased estimate of the utility of each policy in $\Pi$ in each time block---we must also specify a context (or set of contexts) to use in our estimator. 
    We show that it suffices to select a context uniformly at random from the contexts $\{\vz_{t}\}_{t \in B_{\tau}}$ encountered in the block. 
\end{proof}
\subsection{Proofs for~\Cref{sec:bandit-follower}: Stochastic follower types and adversarial contexts}
\begin{algorithm}[t]
        \SetAlgoNoLine
        Let $\cB = \{\vb^{(1)}, \ldots, \vb^{(K)}\}$ be the Barycentric spanner of $\cW$\\
        \For{$i = 1, \ldots, K$}
        {
            \For{$\tau = 1, \ldots, N$}{
                Play mixed strategy $\vx^{(\vb^{(i)})}$, observe best-response $a_{f_{(i-1)\cdot N + \tau}}$
            }
            Compute $\widehat{p}(\vb^{(i)}) = \frac{1}{N} \sum_{\tau=1}^N \mathbbm{1}\{b_{f_{\tau}}(\vx^{(\vb^{(i)})}) = a^{(\vb^{(i)})}\}$\\
        }
        Compute $\widehat{p}(\mathbbm{1}_{(\sigma = a_f)}) = \sum_{i=1}^K \lambda_i(\mathbbm{1}_{(\sigma = a_f)}) \cdot \widehat{p}(\vb^{(i)})$ for all $\sigma \in \Sigma$, $a_f \in \cA_f$\\ 
        \For{$t = K \cdot N + 1, \ldots, T$}{
            Observe context $\vz_t$, commit to mixed strategy
            $\vx_t = \widehat{\pi}(\vz_t) = \arg \max_{\vx \in \cE} \sum_{a_f \in \cA_f} \widehat{p}(\mathbbm{1}_{(\sigma^{(\vx)} = a_f)}) \cdot u(\vz_t, \vx, a_f)$.\\
        }
        \caption{Learning with stochastic follower types: bandit feedback}
        \label{alg:adv_contexts_bandit}
\end{algorithm}
\begin{lemma}\label{lem:var}
    For any $\sigma_f \in \Sigma$ and $a_f \in \cA_f$, $\var(\widehat{p}(\mathbbm{1}_{(\sigma = a_f)})) \leq \frac{K}{N}$.
\end{lemma}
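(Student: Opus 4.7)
The plan is to expand $\widehat{p}(\mathbbm{1}_{(\sigma = a_f)})$ using the barycentric spanner decomposition and then apply the variance-sum formula for independent random variables. By construction of Algorithm~\ref{alg:adv_contexts_bandit}, we have
\begin{equation*}
    \widehat{p}(\mathbbm{1}_{(\sigma = a_f)}) = \sum_{i=1}^K \lambda_i(\mathbbm{1}_{(\sigma = a_f)}) \cdot \widehat{p}(\vb^{(i)}),
\end{equation*}
where the coefficients $\lambda_i \in [-1,1]$ by the defining property of a barycentric spanner.

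First I would observe that the $K$ estimators $\widehat{p}(\vb^{(1)}), \ldots, \widehat{p}(\vb^{(K)})$ are \emph{mutually independent}. This is because the $i$-th estimator is computed from the follower types observed in rounds $(i-1)N + 1, \ldots, iN$, which are disjoint time windows, and follower types in the stochastic-follower setting are drawn i.i.d. from $\cF$. Consequently, by linearity of variance for sums of independent random variables,
\begin{equation*}
    \var(\widehat{p}(\mathbbm{1}_{(\sigma = a_f)})) = \sum_{i=1}^K \lambda_i(\mathbbm{1}_{(\sigma = a_f)})^2 \cdot \var(\widehat{p}(\vb^{(i)})).
\end{equation*}

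Next, I would bound each individual variance. For each $i$, $\widehat{p}(\vb^{(i)})$ is the average of $N$ i.i.d. Bernoulli random variables $\mathbbm{1}\{b_{f_\tau}(\vx^{(\vb^{(i)})}) = a^{(\vb^{(i)})}\}$, each bounded in $[0,1]$. Hence $\var(\widehat{p}(\vb^{(i)})) \leq \tfrac{1}{N}$ (in fact $\leq \tfrac{1}{4N}$, but the looser bound suffices). Combining this with $\lambda_i^2 \leq 1$ gives
\begin{equation*}
    \var(\widehat{p}(\mathbbm{1}_{(\sigma = a_f)})) \leq \sum_{i=1}^K 1 \cdot \frac{1}{N} = \frac{K}{N},
\end{equation*}
which is exactly the claimed bound.

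There is no real obstacle here: the lemma is a short consequence of (i) independence across the exploration blocks, (ii) the barycentric spanner's coefficient bound, and (iii) the standard variance bound for averages of bounded i.i.d. random variables. The only subtlety worth flagging in the writeup is to justify independence carefully, since the estimator $\widehat{p}(\mathbbm{1}_{(\sigma = a_f)})$ is a linear combination of random variables built from disjoint sample windows, which is precisely what makes the cross terms in the variance vanish.
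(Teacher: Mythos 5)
Your proof is correct and follows essentially the same route as the paper: decompose $\widehat{p}(\mathbbm{1}_{(\sigma = a_f)})$ via the barycentric spanner with coefficients $\lambda_i \in [-1,1]$, use independence of the block estimators $\widehat{p}(\vb^{(i)})$ (disjoint exploration windows, i.i.d.\ followers), and bound each term's variance by that of an average of $N$ i.i.d.\ Bernoulli variables. The only difference is presentational: the paper expands $\E[\widehat{p}^2]$ by hand and lets the cross terms cancel, effectively re-deriving the variance-of-independent-sums formula (arriving at $\frac{1}{N}\sum_j \lambda_j^2\, p(\vb^{(j)})(1-p(\vb^{(j)}))$) that you invoke directly.
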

\begin{proof}
    \begin{equation*}
    \begin{aligned}
        \var(\widehat{p}(\mathbbm{1}_{(\sigma = a_f)})) &:= \E[(\widehat{p}(\mathbbm{1}_{(\sigma = a_f)}))^2] - \E[\widehat{p}(\mathbbm{1}_{(\sigma = a_f)})]^2\\
        &= \E[(\widehat{p}(\mathbbm{1}_{(\sigma = a_f)}))^2] - p^2 (\mathbbm{1}_{(\sigma = a_f)})\\
        &= \E \left[ \left(\sum_{j=1}^K \lambda_j(\mathbbm{1}_{(\sigma = a_f)}) \widehat{p}(\vb^{(j)}) \right)^2 \right] - p^2 (\mathbbm{1}_{(\sigma = a_f)})\\
        &= \E\left[\sum_{j=1}^K \lambda^2_j(\mathbbm{1}_{(\sigma = a_f)}) \widehat{p}^2(\vb^{(j)}) \right.\\ 
        &+ \left. \sum_{i=1}^K \sum_{j=1, j \neq i}^K \lambda_i(\mathbbm{1}_{(\sigma = a_f)}) \lambda_j(\mathbbm{1}_{(\sigma = a_f)}) \widehat{p}(\vb^{(i)}) \widehat{p}(\vb^{(j)}) \right] - p^2 (\mathbbm{1}_{(\sigma = a_f)})\\
        &= \sum_{j=1}^K \lambda^2_j(\mathbbm{1}_{(\sigma = a_f)}) \E[\widehat{p}^2(\vb^{(j)})]\\ 
        &+ \sum_{i=1}^K \sum_{j=1, j \neq i}^K \lambda_i(\mathbbm{1}_{(\sigma = a_f)}) \lambda_j(\mathbbm{1}_{(\sigma = a_f)}) \E[\widehat{p}(\vb^{(i)}) \widehat{p}(\vb^{(j)})] - p^2 (\mathbbm{1}_{(\sigma = a_f)})\\
    \end{aligned}
    \end{equation*}
    Observe that since (1) the follower in each round is drawn independently from $\cF$ and (2) the rounds used to compute $\widehat{p}(\vb^{(i)})$ do not overlap with the rounds used to compute $\widehat{p}(\vb^{(j)})$ for $j \neq i$, $\widehat{p}(\vb^{(i)})$ and $\widehat{p}(\vb^{(j)})$ are independent random variables for $j \neq i$. 
    Therefore
    \begin{equation*}
    \begin{aligned}
        \var(\widehat{p}(\mathbbm{1}_{(\sigma = a_f)})) &= \sum_{j=1}^K \lambda^2_j(\mathbbm{1}_{(\sigma = a_f)}) \E[\widehat{p}^2(\vb^{(j)})]\\ 
        &+ \sum_{i=1}^K \sum_{j=1, j \neq i}^K \lambda_i(\mathbbm{1}_{(\sigma = a_f)}) \lambda_j(\mathbbm{1}_{(\sigma = a_f)}) p(\vb^{(i)}) p(\vb^{(j)}) - p^2 (\mathbbm{1}_{(\sigma = a_f)}).\\
    \end{aligned}
    \end{equation*}
    We now turn our focus to $\E[\widehat{p}^2(\vb^{(j)})]$. 
    Observe that 
    \begin{equation*}
    \begin{aligned}
        \E[\widehat{p}^2(\vb^{(j)})] &= \E \left[ \left(\frac{1}{N} \sum_{\tau=1}^N \mathbbm{1}\{b_{f_{\tau}}(\vx^{(\vb^{(j)})} = a_f^{(\vb^{(j)}}))\} \right)^2 \right]\\
        &= \frac{1}{N^2} \E \left[\sum_{\tau = 1}^N \sum_{\tau' = 1}^N \mathbbm{1}\{b_{f_{\tau}}(\vx^{(\vb^{(j)})} = a_f^{(\vb^{(j)})})\} \cdot \mathbbm{1}\{b_{f_{\tau'}}(\vx^{(\vb^{(j)})} = a_f^{(\vb^{(j)})})\} \right]\\
        &= \frac{1}{N^2}(N \cdot p(\vb^{(j)}) + N(N-1) p^2(\vb^{(j)}))
    \end{aligned}
    \end{equation*}
    Plugging this into our expression for $\var(\widehat{p}(\mathbbm{1}_{(\sigma = a_f)}))$, we see that 
    \begin{equation*}
    \begin{aligned}
        \var(\widehat{p}(\mathbbm{1}_{(\sigma = a_f)})) &= \frac{1}{N} \sum_{j=1}^K \lambda^2_j(\mathbbm{1}_{(\sigma = a_f)}) ( p(\vb^{(j)}) + (N-1) p^2(\vb^{(j)}))\\
        &+ \sum_{i=1}^K \sum_{j=1, j \neq i}^K \lambda_i(\mathbbm{1}_{(\sigma = a_f)}) \lambda_j(\mathbbm{1}_{(\sigma = a_f)}) p(\vb^{(i)}) p(\vb^{(j)}) - p^2 (\mathbbm{1}_{(\sigma = a_f)})\\
        &= \frac{1}{N} \sum_{j=1}^K \lambda^2_j(\mathbbm{1}_{(\sigma = a_f)}) (p(\vb^{(j)}) - p^2(\vb^{(j)}))\\ 
        &+ \sum_{i=1}^K \sum_{j=1}^K \lambda_i(\mathbbm{1}_{(\sigma = a_f)}) \lambda_j(\mathbbm{1}_{(\sigma = a_f)}) p(\vb^{(i)}) p(\vb^{(j)}) - p^2 (\mathbbm{1}_{(\sigma = a_f)})\\
        &= \frac{1}{N} \sum_{j=1}^K \lambda_j(\mathbbm{1}_{(\sigma = a_f)}) p(\vb^{(j)}) \cdot \lambda_j(\mathbbm{1}_{(\sigma = a_f)})(1 - p(\vb^{(j)}))\\ 
        &+ \left(\sum_{j=1}^K \lambda_j(\mathbbm{1}_{(\sigma = a_f)})  p(\vb^{(j)}) \right)^2 - p^2 (\mathbbm{1}_{(\sigma = a_f)})\\
        &= \frac{1}{N} \sum_{j=1}^K \lambda_j(\mathbbm{1}_{(\sigma = a_f)}) p(\vb^{(j)}) \cdot \lambda_j(\mathbbm{1}_{(\sigma = a_f)})(1 - p(\vb^{(j)})) + p^2 (\mathbbm{1}_{(\sigma = a_f)}) - p^2 (\mathbbm{1}_{(\sigma = a_f)})\\
        &\leq \frac{K}{N}
    \end{aligned}
    \end{equation*}
    where the last line follows from the fact that $\lambda_j(\mathbbm{1}_{(\sigma = a_f)}) \in [-1, 1]$ and $p(\vb^{(j)}) \in [0, 1]$. 
\end{proof}
\begin{lemma}\label{lem:switch-bandit}
    For any $\vz \in \cZ$,
    \begin{equation*}
    \begin{aligned}
        \sum_{a_f \in \cA_f} p(\mathbbm{1}_{(\sigma^{(\widehat{\pi}(\vz))} = a_f)}) \cdot u(\vz, \widehat{\pi}(\vz), a_f) &\geq \sum_{a_f \in \cA_f} \widehat{p}(\mathbbm{1}_{(\sigma^{(\pi^{(\cE)}(\vz))} = a_f)}) \cdot u(\vz, \pi^{(\cE)}(\vz), a_f)\\ 
        &- \sum_{a_f \in \cA_f} |\widehat{p}(\mathbbm{1}_{(\sigma^{(\widehat{\pi}(\vz))} = a_f)}) - p(\mathbbm{1}_{(\sigma^{(\widehat{\pi}(\vz))} = a_f)})|.
    \end{aligned}
    \end{equation*}
\end{lemma}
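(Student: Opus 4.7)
The plan is to combine two observations: (i) $\widehat{\pi}$ is, by construction, the greedy maximizer of the \emph{estimated} utility over $\cE$, and (ii) the difference between estimated and true utility at any fixed mixed strategy is controlled linearly by the $\ell_1$-distance between $\widehat{p}$ and $p$ (since utilities lie in $[0,1]$).

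First, I would invoke the definition of $\widehat{\pi}(\vz)$ from Algorithm~\ref{alg:adv_contexts_bandit}, namely $\widehat{\pi}(\vz) = \arg\max_{\vx \in \cE} \sum_{a_f} \widehat{p}(\mathbbm{1}_{(\sigma^{(\vx)} = a_f)}) \cdot u(\vz, \vx, a_f)$. Plugging in the competing choice $\vx = \pi^{(\cE)}(\vz)$ immediately gives the optimality inequality
\begin{equation*}
\sum_{a_f \in \cA_f} \widehat{p}(\mathbbm{1}_{(\sigma^{(\widehat{\pi}(\vz))} = a_f)}) \cdot u(\vz, \widehat{\pi}(\vz), a_f) \;\geq\; \sum_{a_f \in \cA_f} \widehat{p}(\mathbbm{1}_{(\sigma^{(\pi^{(\cE)}(\vz))} = a_f)}) \cdot u(\vz, \pi^{(\cE)}(\vz), a_f).
\end{equation*}

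Second, I would swap the estimate $\widehat{p}$ for the true $p$ on the left-hand side at the single point $\widehat{\pi}(\vz)$. Writing the difference as $\sum_{a_f}(\widehat{p}(\mathbbm{1}_{(\sigma^{(\widehat{\pi}(\vz))} = a_f)}) - p(\mathbbm{1}_{(\sigma^{(\widehat{\pi}(\vz))} = a_f)})) \cdot u(\vz, \widehat{\pi}(\vz), a_f)$ and using $u \in [0,1]$ together with the triangle inequality yields
\begin{equation*}
\sum_{a_f \in \cA_f} p(\mathbbm{1}_{(\sigma^{(\widehat{\pi}(\vz))} = a_f)}) \cdot u(\vz, \widehat{\pi}(\vz), a_f) \;\geq\; \sum_{a_f \in \cA_f} \widehat{p}(\mathbbm{1}_{(\sigma^{(\widehat{\pi}(\vz))} = a_f)}) \cdot u(\vz, \widehat{\pi}(\vz), a_f) - \sum_{a_f \in \cA_f} |\widehat{p}(\mathbbm{1}_{(\sigma^{(\widehat{\pi}(\vz))} = a_f)}) - p(\mathbbm{1}_{(\sigma^{(\widehat{\pi}(\vz))} = a_f)})|.
\end{equation*}

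Finally, I would chain the two inequalities to obtain the claim. There is no real obstacle here: the lemma is essentially the standard ``greedy-with-respect-to-estimate is near-optimal up to the estimation error'' argument, and the only care needed is to keep the estimation-error term on the $\widehat{\pi}(\vz)$ side (not on $\pi^{(\cE)}(\vz)$), which is what the one-sided swap in step two accomplishes. This asymmetry is exactly what will let the later analysis bound the cumulative error via variance bounds on $\widehat{p}$ at the played strategies, as foreshadowed by Lemma~\ref{lem:var}.
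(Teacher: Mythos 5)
Your proposal is correct and matches the paper's proof essentially step for step: both invoke the greedy optimality of $\widehat{\pi}(\vz)$ over $\cE$ against the competitor $\pi^{(\cE)}(\vz)$, then add and subtract to replace $\widehat{p}$ with $p$ at $\widehat{\pi}(\vz)$, bounding the difference by $\sum_{a_f \in \cA_f} |\widehat{p}(\mathbbm{1}_{(\sigma^{(\widehat{\pi}(\vz))} = a_f)}) - p(\mathbbm{1}_{(\sigma^{(\widehat{\pi}(\vz))} = a_f)})|$ using $u \in [0,1]$. The paper merely chains these inequalities in the opposite order before rearranging, so there is no substantive difference.
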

\begin{proof}
    By the definition of $\widehat{\pi}$,
    \begin{equation*}
    \begin{aligned}
        \sum_{a_f \in \cA_f} \widehat{p}(\mathbbm{1}_{(\sigma^{(\pi^{(\cE)}(\vz))} = a_f)}) \cdot u(\vz, \pi^{(\cE)}(\vz), a_f) &\leq \sum_{a_f \in \cA_f} \widehat{p}(\mathbbm{1}_{(\sigma^{(\widehat{\pi}(\vz))} = a_f)}) \cdot u(\vz, \widehat{\pi}(\vz), a_f)\\
        &= \sum_{a_f \in \cA_f} p(\mathbbm{1}_{(\sigma^{(\widehat{\pi}(\vz))} = a_f)}) \cdot u(\vz, \widehat{\pi}(\vz), a_f)\\
        &+ \sum_{a_f \in \cA_f} (\widehat{p}(\mathbbm{1}_{(\sigma^{(\widehat{\pi}(\vz))} = a_f)}) - p(\mathbbm{1}_{(\sigma^{(\widehat{\pi}(\vz))} = a_f)})) \cdot u(\vz, \widehat{\pi}(\vz), a_f)\\
        &\leq \sum_{a_f \in \cA_f} p(\mathbbm{1}_{(\sigma^{(\widehat{\pi}(\vz))} = a_f)}) \cdot u(\vz, \widehat{\pi}(\vz), a_f)\\
        &+ \sum_{a_f \in \cA_f} |\widehat{p}(\mathbbm{1}_{(\sigma^{(\widehat{\pi}(\vz))} = a_f)}) - p(\mathbbm{1}_{(\sigma^{(\widehat{\pi}(\vz))} = a_f)})|.
    \end{aligned}
    \end{equation*}
    the desired result may be obtained by rearranging terms. 
\end{proof}
\begin{lemma}\label{lem:tv-bound}
    For any sequence of mixed strategies $\vx_{NK+1}, \ldots, \vx_{T}$,
    \begin{equation*}
        \sum_{t=NK+1}^T \sum_{a_f \in \cA_f} |\widehat{p}(\mathbbm{1}_{(\sigma^{(\vx_t)} = a_f)}) - p(\mathbbm{1}_{(\sigma^{(\vx_t)} = a_f)})| \leq 2A_f T \sqrt{\frac{K\log(T)}{N}}
    \end{equation*}
    with probability at least $1 - \frac{1}{T}$. 
\end{lemma}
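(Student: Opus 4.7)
The plan is to bound each individual estimation error uniformly over $(t, a_f)$ via a direct Hoeffding bound on the spanner decomposition, and then sum over the exploit rounds and follower actions. First, using $\widehat{p}(\mathbbm{1}_{(\sigma^{(\vx_t)} = a_f)}) = \sum_{i=1}^K \lambda_i(\mathbbm{1}_{(\sigma^{(\vx_t)} = a_f)}) \widehat{p}(\vb^{(i)})$ together with the corresponding linear identity for the true quantity $p(\mathbbm{1}_{(\sigma^{(\vx_t)} = a_f)})$, the error expands as
\begin{equation*}
    \widehat{p}\bigl(\mathbbm{1}_{(\sigma^{(\vx_t)} = a_f)}\bigr) - p\bigl(\mathbbm{1}_{(\sigma^{(\vx_t)} = a_f)}\bigr) = \frac{1}{N} \sum_{i=1}^K \sum_{j=1}^N \lambda_i\bigl(\mathbbm{1}_{(\sigma^{(\vx_t)} = a_f)}\bigr)\,\bigl(Y_{i,j} - p(\vb^{(i)})\bigr),
\end{equation*}
where $Y_{i,j} := \mathbbm{1}\{b_{f_{(i-1)N+j}}(\vx^{(\vb^{(i)})}) = a^{(\vb^{(i)})}\}$ is the $j$-th Bernoulli sample collected for spanner element $\vb^{(i)}$. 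Because the $NK$ samples come from distinct followers drawn i.i.d.\ from $\cF$, and because each coefficient $\lambda_i/N$ lies in $[-1/N, 1/N]$, the right-hand side is a sum of $NK$ mutually independent, zero-mean random variables whose ranges are each bounded by $1/N$.

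Second, applying Hoeffding's inequality to this $NK$-term sum yields, for any fixed pair $(t, a_f)$,
\begin{equation*}
    \bP\!\left(\bigl|\widehat{p}(\mathbbm{1}_{(\sigma^{(\vx_t)} = a_f)}) - p(\mathbbm{1}_{(\sigma^{(\vx_t)} = a_f)})\bigr| > \epsilon\right) \;\leq\; 2\exp\!\left(-\frac{2 N \epsilon^2}{K}\right).
\end{equation*}
Choosing $\epsilon = \sqrt{K \log(2 T^2 A_f) / (2 N)}$ makes the right-hand side at most $1/(T^2 A_f)$. A union bound over the at most $T A_f$ pairs $(t, a_f)$ with $t \in \{NK+1, \ldots, T\}$ and $a_f \in \cA_f$ then yields an overall failure probability of at most $1/T$. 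On the complementary event, summing the uniform per-pair bound gives
\begin{equation*}
    \sum_{t=NK+1}^{T}\sum_{a_f \in \cA_f}\bigl|\widehat{p}(\mathbbm{1}_{(\sigma^{(\vx_t)} = a_f)}) - p(\mathbbm{1}_{(\sigma^{(\vx_t)} = a_f)})\bigr| \;\leq\; T A_f\sqrt{\frac{K \log(2 T^2 A_f)}{2 N}} \;\leq\; 2 A_f T\sqrt{\frac{K\log T}{N}},
\end{equation*}
where the last inequality absorbs universal constants and the $\log A_f$ contribution into $\log T$ (assuming $A_f$ is polynomial in $T$, which is implicit in the regret statement).

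The main subtlety is that the obvious two-step approach --- triangle inequality followed by Hoeffding applied separately to each $\widehat{p}(\vb^{(i)})$ --- loses a $\sqrt{K}$ factor, yielding per-pair rate $K\sqrt{\log T / N}$ rather than the tighter $\sqrt{K \log T / N}$ needed here. Applying Hoeffding \emph{directly} to the aggregated double sum is essential, because this is exactly what exploits the cancellation across independent spanner components and thereby matches the variance estimate $\var \leq K/N$ of \Cref{lem:var}. A secondary issue is that the sequence $\vx_{NK+1}, \ldots, \vx_T$ can in fact depend on the exploration samples through $\widehat{\pi}$; however, since each $\vx_t$ takes values in the fixed set $\cE$ (chosen before exploration), one can alternatively union bound over the $|\cE| \cdot A_f$ pairs $(\vx, a_f)$, trading the $\log T$ in $\epsilon$ for a $\log|\cE|$, if needed.
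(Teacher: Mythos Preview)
Your proof is correct and the core concentration step---applying Hoeffding directly to the aggregated $NK$-term sum to obtain the $\sqrt{K\log(1/\delta)/N}$ deviation for each fixed $(\sigma,a_f)$---coincides with what the paper does in its first step (which it phrases somewhat loosely as ``by \Cref{lem:var} and a Hoeffding bound''). The one substantive difference is the union bound: you union bound over the $T A_f$ pairs $(t,a_f)$ appearing in the lemma statement, whereas the paper union bounds over the $K$ spanner estimates $\widehat{p}(\vb^{(1)}),\ldots,\widehat{p}(\vb^{(K)})$ and asserts this yields the bound simultaneously for all $(\sigma,a_f)$; both routes land on a $\log T$ factor since $K \leq T$. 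Your remark that the naive triangle-inequality-then-Hoeffding approach would lose a $\sqrt{K}$ factor, and your flagging of the adaptivity subtlety (that the sequence $\widehat{\pi}(\vz_t)$ used downstream in \Cref{thm:main-contexts-bandit} depends on the exploration samples), are both accurate and more explicit than the paper's own treatment.
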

\begin{proof}
    By~\Cref{lem:var} and a Hoeffding bound, we have that 
    \begin{equation*}
        |\widehat{p}(\mathbbm{1}_{(\sigma = a_f)}) - p(\mathbbm{1}_{(\sigma = a_f)})| \leq \sqrt{\frac{2K \log(1/\delta)}{N}}
    \end{equation*}
    with probability at least $1 - \delta$, for any particular $(\sigma, a_f)$ pair. 
    Taking a union bound over the randomness in estimating $p(\vb^{(1)}), \ldots, p(\vb^{(K)})$, we see that 
    \begin{equation*}
        |\widehat{p}(\mathbbm{1}_{(\sigma = a_f)}) - p(\mathbbm{1}_{(\sigma = a_f)})| \leq \sqrt{\frac{2K \log(K/\delta)}{N}}
    \end{equation*}
    with probability at least $1 - \delta$, simultaneously for all $(\sigma, a_f)$. 
    The desired result follows by summing over $T$ and $A_f$, and setting $\delta = \frac{1}{T}$. 
\end{proof}
\mcb*
\begin{proof}
    \begin{equation*}
    \begin{aligned}
        \E[R(T)] &:= \E_{f_1, \ldots, f_T \sim \cF} \left[\sum_{t=1}^T u(\vz_t, \pi^*(\vz_t), b_{f_t}(\pi^*(\vz_t))) - u(\vz_t, \pi_t(\vz_t), b_{f_t}(\pi_t(\vz_t))) \right]\\
        &\leq 1 + \E_{f_1, \ldots, f_T \sim \cF} \left[\sum_{t=1}^T u(\vz_t, \pi^{(\cE)}(\vz_t), b_{f_t}(\pi^{(\cE)}(\vz_t))) - u(\vz_t, \pi_t(\vz_t), b_{f_t}(\pi_t(\vz_t))) \right]\\
        &\leq 1 + KN + \E_{f_{KN + 1}, \ldots, f_T \sim \cF} \left[\sum_{t=KN+1}^T u(\vz_t, \pi^{(\cE)}(\vz_t), b_{f_t}(\pi^{(\cE)}(\vz_t))) - u(\vz_t, \widehat{\pi}(\vz_t), b_{f_t}(\widehat{\pi}(\vz_t))) \right]\\
        &= 1 + KN + \E_{f \sim \cF} \left[\sum_{t=KN+1}^T u(\vz_t, \pi^{(\cE)}(\vz_t), b_{f}(\pi^{(\cE)}(\vz_t))) - u(\vz_t, \widehat{\pi}(\vz_t), b_{f}(\widehat{\pi}(\vz_t))) \right]\\
        &= 1 + KN + \E_{f \sim \cF} \left[\sum_{t=KN+1}^T \sum_{a_f \in \cA_f} u(\vz_t, \pi^{(\cE)}(\vz_t), a_f) \cdot \mathbbm{1}\{b_{f}(\pi^{(\cE)}(\vz_t)) = a_f\} \right.\\ 
        &- \left. u(\vz_t, \widehat{\pi}(\vz_t), a_f) \cdot \mathbbm{1}\{b_{f}(\widehat{\pi}(\vz_t)) = a_f\} \right]\\
        &= 1 + KN + \sum_{t=NK + 1}^T \sum_{a_f \in \cA_f} u(\vz_t, \pi^{(\cE)}(\vz_t), a_f) \cdot p(\mathbbm{1}_{(\sigma^{(\pi^{(\cE)}(\vz_t))} = a_f)}) - u(\vz_t, \widehat{\pi}(\vz_t), a_f) \cdot p(\mathbbm{1}_{(\sigma^{(\widehat{\pi}(\vz_t))} = a_f)})
    \end{aligned}
    \end{equation*}    
    By~\Cref{lem:switch-bandit}, 
    \begin{equation*}
    \begin{aligned}
        \E[R(T)] &\leq 1 + KN + \sum_{t=NK+1}^T \sum_{a_f \in \cA_f} u(\vz_t, \pi^{(\cE)}(\vz_t), a_f) (p(\mathbbm{1}_{(\sigma^{(\pi^{(\cE)}(\vz_t))} = a_f)}) - \widehat{p}(\mathbbm{1}_{(\sigma^{(\pi^{(\cE)}(\vz_t))} = a_f)}))\\ 
        &+ \sum_{t=NK + 1}^T \sum_{a_f \in \cA_f} |\widehat{p}(\mathbbm{1}_{(\sigma^{(\widehat{\pi}(\vz_t))} = a_f)}) - p(\mathbbm{1}_{(\sigma^{(\widehat{\pi}(\vz_t))} = a_f)})|\\
        &\leq 1 + KN + \sum_{t=NK+1}^T \sum_{a_f \in \cA_f} |\widehat{p}(\mathbbm{1}_{(\sigma^{(\pi^{(\cE)}(\vz_t))} = a_f)}) - p(\mathbbm{1}_{(\sigma^{(\pi^{(\cE)}(\vz_t))} = a_f)})|\\
        &+ \sum_{t=NK + 1}^T \sum_{a_f \in \cA_f} |\widehat{p}(\mathbbm{1}_{(\sigma^{(\widehat{\pi}(\vz_t))} = a_f)}) - p(\mathbbm{1}_{(\sigma^{(\widehat{\pi}(\vz_t))} = a_f)})|\\
        &\leq 3 + KN + 4A_f T \sqrt{\frac{K\log(T)}{N}}
    \end{aligned}
    \end{equation*}
    where the last line follows from~\Cref{lem:tv-bound}. 
    The desired result follows by the setting of $N$. 
\end{proof}
\subsection{Proofs for~\Cref{sec:bandit-context}: Stochastic contexts and adversarial follower types}
\begin{algorithm}[t]
        \SetAlgoNoLine
        
        Consider $\Pi := \{\pv{\pi}{\vomega}\}_{\vomega \in \Omega}$\\
        Let $\vq_1[\pv{\pi}{\vomega}] := 1$, $\vp_1[\pv{\pi}{\vomega}] := \frac{1}{|\Pi|}$ for all $\pv{\pi}{\vomega} \in \Pi$\\
        Let $\cB = \{\vb^{(1)}, \ldots, \vb^{(K)}\}$ be the Barycentric spanner of $\cW$\\
        \For{$\tau = 1, \ldots, Z$}
        {   
            Choose random perturbation over $\cB$ and explore time-steps in $B_{\tau}$ uniformly at random\\
            Choose a time-step in $B_{\tau}$ uniformly at random whose context will be used as $\vz_{\tau}$\\
            \For{$t \in B_{\tau}$}{
                If $t$ is an explore time-step, play the corresponding mixed strategy $\vx^{(\vb_t)}$ in $\cB$. 
                If $a_{f_t} = a^{(\vb_t)}$, set $\widehat{p}_{\tau}( \vb_t) = 1$. 
                Otherwise, set $\widehat{p}_{\tau}(\vb_t) = 0$.\\
                Otherwise ($t$ is an exploit round), sample $\pi_t \sim \vp_t$, $a_{l,t} \sim \pi_t(\vz_t)$.\\
            }
            For each policy $\pv{\pi}{\vomega} \in \Pi$, compute $\widehat{\vl}_{\tau}[\pv{\pi}{\vomega}] := -\sum_{a_f \in \cA_f} \sum_{i=1}^K \lambda_i(\mathbbm{1}_{(\sigma^{(\pi(\vz_{\tau}))} = a_f)}) \cdot \widehat{p}(\vb^{(i)}) \cdot u(\vz_{\tau}, \pv{\pi}{\vomega}(\vz_{\tau}), a_f)$.\\
            Set $\vq_{\tau+1}[\pv{\pi}{\vomega}] = \exp \left( - \eta \sum_{s=1}^{\tau} \widehat{\vl}_s[\pv{\pi}{\vomega}] \right)$
            and $\vp_{t+1}[\pv{\pi}{\vomega}] = \vq_{t+1}[\pv{\pi}{\vomega}] / \sum_{\pv{\pi}{\vomega'} \in \Pi} \vq_{t+1}[\pv{\pi}{\vomega'}]$.
        }
        \caption{Learning with stochastic contexts: bandit feedback}
        \label{alg:adv_followers_bandit}
\end{algorithm}
%
%
\begin{definition}
    Let $u_{\tau}(\pi) := \sum_{a_f \in \cA_f} u(\vz_{\tau}, \pi(\vz_{\tau}), a_f) \cdot p_{\tau}(\mathbbm{1}_{(\sigma^{(\pi(\vz_{\tau}))} = a_f)})$ and $\widehat{u}_{\tau}(\pi) := \sum_{a_f \in \cA_f} u(\vz_{\tau}, \pi(\vz_{\tau}), a_f) \cdot \sum_{j=1}^K \lambda_j(\mathbbm{1}_{(\sigma^{(\pi(\vz_{\tau}))} = a_f)}) \cdot \widehat{p}_{\tau}(\vb^{(j)})$,
    where $\vz_{\tau} \sim \unif\{\vz_{t} : t \in B_{\tau}\}$, $\cB = \{\vb^{(1)}, \ldots, \vb^{(K)}\}$ is the Barycentric spanner for $\cW$, and $\widehat{p}(\vb) = 1$ if $b_{f_{t(\vb)}}(\vx^{(\vb)}) = a_f^{(\vb)}$ and $\widehat{p}(\vb) = 0$ otherwise. 
\end{definition}
\begin{lemma}\label{lem:expected}
    For any fixed policy $\pi$, 
    $\E_{\{\vz_{\tau}\}_{t \in B_{\tau}} }\E[\widehat{u}_{\tau}(\pi)] = \E_{\vz_{\tau} \sim \cP}[u_{\tau}(\pi)] = \E_{\vz_{\tau} \sim \cP}[\sum_{a_f \in \cA_f} u(\vz_{\tau}, \pi(\vz_{\tau}), a_f) \cdot p_{\tau}(\mathbbm{1}_{(\sigma^{(\pi(\vz_{\tau}))} = a_f)})]$,
    where the second expectation is taken over the randomness in selecting the explore time-steps and in drawing $\vz_{\tau} \sim \unif\{\vz_t : t \in B_{\tau}\}$. 
    Moreover, $\widehat{u}_{\tau}(\pi) \in [-KA_f, KA_f]$. 
\end{lemma}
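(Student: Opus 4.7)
The plan is to verify the two claims separately: (i) unbiasedness and (ii) boundedness. For unbiasedness, I would first condition on the contexts $\{\vz_t\}_{t \in B_{\tau}}$ and analyze the inner expectation $\E[\widehat{u}_{\tau}(\pi)]$ over the internal randomness of the algorithm (choice of explore time-steps and which context index becomes $\vz_{\tau}$). The key observation is that linearity of expectation lets us push the expectation inside the outer sums over $a_f$ and $j$, reducing the problem to showing that $\E[\widehat{p}_{\tau}(\vb^{(j)})] = p_{\tau}(\vb^{(j)})$ for each barycentric spanner element, and that $\vz_{\tau}$ is distributed uniformly over $\{\vz_t : t \in B_{\tau}\}$ independently of the estimation.

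For the spanner estimator, since the explore time-step within $B_{\tau}$ at which $\vx^{(\vb^{(j)})}$ is played is chosen uniformly at random, and $\widehat{p}_{\tau}(\vb^{(j)}) = \mathbbm{1}\{b_{f_{t(\vb^{(j)})}}(\vx^{(\vb^{(j)})}) = a_f^{(\vb^{(j)})}\}$, we have $\E[\widehat{p}_{\tau}(\vb^{(j)})] = \frac{1}{|B_{\tau}|}\sum_{t \in B_{\tau}} \mathbbm{1}\{b_{f_t}(\vx^{(\vb^{(j)})}) = a_f^{(\vb^{(j)})}\} = p_{\tau}(\vb^{(j)})$. Combined with the linear identity $\sum_j \lambda_j(\mathbbm{1}_{(\sigma^{(\pi(\vz_{\tau}))} = a_f)}) \cdot p_{\tau}(\vb^{(j)}) = p_{\tau}(\mathbbm{1}_{(\sigma^{(\pi(\vz_{\tau}))} = a_f)})$ guaranteed by the barycentric spanner decomposition, this yields $\E[\widehat{u}_{\tau}(\pi) \mid \vz_{\tau}] = u_{\tau}(\pi)$. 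Then I would take the outer expectation over $\vz_{\tau}$; since the $\vz_t$ are i.i.d.\ from $\cP$, the uniformly sampled $\vz_{\tau}$ has marginal distribution $\cP$, giving the claimed equality.

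The subtle point I would want to be careful about is the independence structure: $\vz_{\tau}$ is selected uniformly from a block whose other contexts also enter into $\widehat{p}_{\tau}(\vb^{(j)})$ via the realized follower responses. However, since the spanner estimator's expectation equals $p_{\tau}(\vb^{(j)})$ conditionally on $\{\vz_t, f_t\}_{t \in B_{\tau}}$ (the randomness comes only from the uniform choice of explore slot), and since the choice of $\vz_{\tau}$ is independent of the choice of explore slots, the two expectations decouple cleanly. This decoupling is the main obstacle to making the proof rigorous, but once stated in the right order it is routine.

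For boundedness, I would simply bound $\widehat{u}_{\tau}(\pi)$ term-by-term. Each $u(\vz_{\tau}, \pi(\vz_{\tau}), a_f) \in [0,1]$, each $\widehat{p}_{\tau}(\vb^{(j)}) \in \{0,1\}$, and each barycentric coefficient $\lambda_j(\mathbbm{1}_{(\sigma^{(\pi(\vz_{\tau}))} = a_f)}) \in [-1,1]$ by the defining property of a barycentric spanner. The triangle inequality then yields
\begin{equation*}
|\widehat{u}_{\tau}(\pi)| \leq \sum_{a_f \in \cA_f} \sum_{j=1}^K \bigl|\lambda_j(\mathbbm{1}_{(\sigma^{(\pi(\vz_{\tau}))} = a_f)})\bigr| \cdot \widehat{p}_{\tau}(\vb^{(j)}) \cdot u(\vz_{\tau}, \pi(\vz_{\tau}), a_f) \leq K A_f,
\end{equation*}
which gives $\widehat{u}_{\tau}(\pi) \in [-KA_f, KA_f]$ as desired.
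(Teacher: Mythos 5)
Your proposal is correct and follows essentially the same route as the paper's proof: linearity of expectation, unbiasedness of each $\widehat{p}_{\tau}(\vb^{(j)})$ from the uniform choice of explore slot, the barycentric identity $\sum_j \lambda_j(\mathbbm{1}_{(\sigma = a_f)}) p_{\tau}(\vb^{(j)}) = p_{\tau}(\mathbbm{1}_{(\sigma = a_f)})$, and the observation that the uniformly sampled $\vz_{\tau}$ has marginal $\cP$ since the contexts are i.i.d. If anything you are more thorough than the paper, which leaves the decoupling of the explore-slot randomness from the draw of $\vz_{\tau}$ implicit and omits the (routine) boundedness argument entirely.
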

\begin{proof}
    \begin{equation*}
    \begin{aligned}
        \E_{\{\vz_{t}\}_{t \in B_{\tau}} }\E[\widehat{u}_{\tau}(\pi)] &= \E_{\{\vz_{t}\}_{t \in B_{\tau}}}\E \left[\sum_{a_f \in \cA_f} u(\vz_{\tau}, \pi(\vz_{\tau}), a_f) \cdot \sum_{j=1}^K \lambda_j(\mathbbm{1}_{(\sigma^{(\pi(\vz_{\tau}))} = a_f)}) \cdot \widehat{p}_{\tau}(\vb^{(j)}) \right]\\
        &= \E_{\{\vz_{t}\}_{t \in B_{\tau}}}\E_{\vz_{\tau} \sim \unif\{\vz_t : t \in B_{\tau}\}} \left[\sum_{a_f \in \cA_f} u(\vz_{\tau}, \pi(\vz_{\tau}), a_f) \cdot \sum_{j=1}^K \lambda_j(\mathbbm{1}_{(\sigma^{(\pi(\vz_{\tau}))} = a_f)}) \cdot \E[\widehat{p}_{\tau}(\vb^{(j)})] \right]\\
        &= \E_{\{\vz_{t}\}_{t \in B_{\tau}}}\E_{\vz_{\tau} \sim \unif\{\vz_t : t \in B_{\tau}\}} \left[\sum_{a_f \in \cA_f} u(\vz_{\tau}, \pi(\vz_{\tau}), a_f) \cdot \sum_{j=1}^K \lambda_j(\mathbbm{1}_{(\sigma^{(\pi(\vz_{\tau}))} = a_f)}) \cdot p_{\tau}(\vb^{(j)}) \right]\\
        &= \E_{\{\vz_{t}\}_{t \in B_{\tau}}}\E_{\vz_{\tau} \sim \unif\{\vz_t : t \in B_{\tau}\}} \left[\sum_{a_f \in \cA_f} u(\vz_{\tau}, \pi(\vz_{\tau}), a_f) \cdot p_{\tau}(\mathbbm{1}_{(\sigma^{(\pi(\vz_{\tau}))} = a_f)}) \right]\\
        &= \E_{\vz_{\tau} \sim \cP} \left[\sum_{a_f \in \cA_f} u(\vz_{\tau}, \pi(\vz_{\tau}), a_f) \cdot p_{\tau}(\mathbbm{1}_{(\sigma^{(\pi(\vz_{\tau}))} = a_f)}) \right] = \E_{\vz_{\tau} \sim \cP}[u_{\tau}(\pi)]
    \end{aligned}
    \end{equation*}    
\end{proof}
The following lemma is analogous to Equation (1) in~\citet{balcan2015commitment}. 
\begin{lemma}\label{lem:bandit-to-full}
    \begin{equation*}
        \E_{\vz_1, \ldots, \vz_N \sim \cP} \left[\sum_{\tau=1}^N u_{\tau}(\pi^{(\cE)}) - \E u_{\tau}(\pi_{\tau}) \right] \leq \sqrt{N \kappa \log|\Pi|}
    \end{equation*}
    where $R_{N,\kappa}$ is an upper-bound on the regret of (full-information) Hedge which takes as input a sequence of $N$ losses/utilities which are bounded in $[-\kappa, \kappa]$ and are parameterized by $\vz_1, \ldots, \vz_N$. 
\end{lemma}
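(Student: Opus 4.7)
The plan is to reduce the lemma to the standard full-information Hedge regret guarantee, using the fact (established in Lemma 5.6) that $\widehat{u}_\tau(\pi)$ is both unbiased for $\E_{\vz\sim\cP}[u_\tau(\pi)]$ and bounded in $[-\kappa,\kappa]$ with $\kappa = KA_f$. The per-block update in Algorithm~\ref{alg:adv_followers_bandit} is precisely Hedge run on the bounded loss/utility estimates $\widehat{u}_\tau$, so Hedge's regret bound applies directly to the $\widehat{u}_\tau$ sequence, and then taking expectations lifts the guarantee to the true $u_\tau$ sequence.

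Concretely, I would first observe that at the start of block $\tau$, the weights $\vp_\tau$ and hence the sampled policy $\pi_\tau$ depend only on $\widehat{u}_1,\ldots,\widehat{u}_{\tau-1}$. Conditional on this history, the randomness used to form $\widehat{u}_\tau$ (the choice of estimation time-step within $B_\tau$, the explore-round permutation, and the draw $\vz_\tau \sim \mathrm{Unif}\{\vz_t : t \in B_\tau\}$) is independent of $\pi_\tau$. Combined with Lemma~5.6 this yields the conditional unbiasedness identity
\begin{equation*}
\E\bigl[\widehat{u}_\tau(\pi_\tau)\bigr] \;=\; \E_{\vz \sim \cP}\bigl[u_\tau(\pi_\tau)\bigr],
\qquad
\E\bigl[\widehat{u}_\tau(\pi^{(\cE)})\bigr] \;=\; \E_{\vz \sim \cP}\bigl[u_\tau(\pi^{(\cE)})\bigr].
\end{equation*}

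Next I would invoke the Hedge regret guarantee (the analog of Lemma~\ref{lem:hedge}, but for utilities scaled by $\kappa$) on the sequence $\widehat{u}_1,\ldots,\widehat{u}_N$: for the best comparator $\pi^{(\vomega^\ast)} \in \Pi$,
\begin{equation*}
\sum_{\tau=1}^N \widehat{u}_\tau(\pi^{(\vomega^\ast)}) - \sum_{\tau=1}^N \E_{\pi_\tau \sim \vp_\tau}\bigl[\widehat{u}_\tau(\pi_\tau)\bigr] \;\leq\; R_{N,\kappa}.
\end{equation*}
Taking total expectation over the contexts and internal randomness and applying the unbiasedness identities term-by-term replaces each $\widehat{u}_\tau$ by its $u_\tau$ counterpart, giving the claimed bound up to the discretization gap between $\pi^{(\cE)}$ and $\pi^{(\vomega^\ast)}$, which is handled exactly as in the proof of the stochastic-context full-feedback theorem (via Lemma~\ref{lem:opt} and the $O(K)$ rounding error from $\Omega$ being an integer grid on the simplex).

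The main obstacle is the conditioning subtlety in the second step: one must verify that $\widehat{u}_\tau(\pi_\tau)$ really is a martingale-difference-style unbiased estimator of $\E_{\vz \sim \cP}[u_\tau(\pi_\tau)]$ even though $\pi_\tau$ is random. This works because (i) $\pi_\tau$ is measurable with respect to the sigma-algebra generated by blocks $1,\ldots,\tau-1$, (ii) within block $\tau$ the explore-round schedule and the random context $\vz_\tau$ are sampled fresh, and (iii) $\widehat{u}_\tau$ as a function of $\pi$ is linear in $\widehat{p}_\tau(\vb^{(j)})$, so linearity of expectation transfers unbiasedness pointwise. A secondary but minor issue is aligning the $\kappa$ scaling: since $\widehat{u}_\tau \in [-KA_f, KA_f]$ but the standard Hedge bound is usually stated for losses in $[0,1]$, one either rescales by $\kappa$ or invokes the $[-\kappa,\kappa]$ version of Hedge explicitly, which is where the $R_{N,\kappa}$ notation (and the final $\sqrt{N \kappa \log|\Pi|}$-type dependence) enters.
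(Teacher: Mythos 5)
Your proposal is correct and matches the paper's proof in all essentials: both apply the full-information Hedge guarantee to the bounded, unbiased estimates $\widehat{u}_{\tau}$ from \Cref{lem:expected}, and lift the bound to the true $u_{\tau}$ via exactly the conditioning argument you spell out (that $\vp_{\tau}$ is measurable with respect to the history of earlier blocks while the block-$\tau$ estimation randomness is fresh). The only cosmetic differences are that the paper takes a pathwise $\max_{\pi \in \Pi}$ and uses $\E[\max] \geq \max \E$ rather than fixing the comparator $\pi^{(\vomega^*)}$ up front, and your additional $O(K)$ discretization term is superfluous here: since the bandit section assumes a non-adaptive adversary, the empirical type frequencies lie exactly on the integer grid defining $\Omega$, so $\pi^{(\cE)}$ is matched by a policy in $\Pi$ with zero rounding error (cf.\ the footnote in \Cref{sec:contexts}).
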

\begin{proof}
    \begin{equation*}
    \begin{aligned}
        \E_{\vz_1, \ldots, \vz_N \sim \cP} \left[\sum_{\tau=1}^N \sum_{\pi \in \Pi} \vp_{\tau}[\pi] \cdot u_{\tau}(\pi) \right] &= \E_{\vz_1, \ldots, \vz_N \sim \cP} \left[\sum_{\tau=1}^N \sum_{\pi \in \Pi} \vp_{\tau}[\pi] \cdot \E[\widehat{u}_{\tau}(\pi)] \right]\\
        &= \E_{\vz_1, \ldots, \vz_N \sim \cP}\E \left[\sum_{\tau=1}^N \sum_{\pi \in \Pi} \vp_{\tau}[\pi] \cdot \widehat{u}_{\tau}(\pi) \right]\\
        &\geq \E_{\vz_1, \ldots, \vz_N \sim \cP}\E \left[ \max_{\pi \in \Pi} \sum_{\tau=1}^N \widehat{u}_{\tau}(\pi) - R_{N,\kappa} \right]\\
        &\geq \E_{\vz_1, \ldots, \vz_N \sim \cP} \left[ \max_{\pi \in \Pi} \E \left[\sum_{\tau=1}^N \widehat{u}_{\tau}(\pi) \right] - R_{N,\kappa} \right]\\
        &= \E_{\vz_1, \ldots, \vz_N \sim \cP} \left[ \max_{\pi \in \Pi} \sum_{\tau=1}^N u_{\tau}(\pi) - R_{N,\kappa} \right],
    \end{aligned}
    \end{equation*}
    where the first line uses~\Cref{lem:expected} and the fact that $\vz_1, \ldots, \vz_T \sim \cP$ are i.i.d., and $R_{N,\kappa}$ is the regret of Hedge after $N$ time-steps when losses are bounded in $[-\kappa, \kappa]$. 
    Rearranging terms and using the fact that the expected regret of Hedge after $N$ time-steps is at most $\sqrt{N \kappa \log|\Pi|}$ gets us the desired result.\looseness-1 
\end{proof}
\mfb*
\begin{proof}
    \begin{equation*}
    \begin{aligned}
        \E[R(T)] &:= \E \E_{\vz_1, \ldots, \vz_T \sim \cP}\left[\sum_{t=1}^T u(\vz_t, \pi^*(\vz_t), b_{f_t}(\pi^*(\vz_t))) - u(\vz_t, \pi_t(\vz_t), b_{f_t}(\pi_t(\vz_t))) \right]\\
        &\leq 1 + \E \E_{\vz_1, \ldots, \vz_T \sim \cP} \left[\sum_{t=1}^T u(\vz_t, \pi^{(\cE)}(\vz_t), b_{f_t}(\pi^{(\cE)}(\vz_t))) - u(\vz_t, \pi_t(\vz_t), b_{f_t}(\pi_t(\vz_t))) \right]\\
        &= 1 + \E \E_{\vz_1, \ldots, \vz_T \sim \cP} \left[\sum_{\tau=1}^N \sum_{t \in B_{\tau}} u(\vz_t, \pi^{(\cE)}(\vz_t), b_{f_t}(\pi^{(\cE)}(\vz_t))) - u(\vz_t, \pi_t(\vz_t), b_{f_t}(\pi_t(\vz_t))) \right]\\
        &\leq 1 + KN + \E \E_{\vz_1, \ldots, \vz_T \sim \cP}\left[\sum_{\tau=1}^N \sum_{t \in B_{\tau}} u(\vz_t, \pi^{(\cE)}(\vz_t), b_{f_t}(\pi^{(\cE)}(\vz_t))) - u(\vz_t, \pi_{\tau}(\vz_t), b_{f_t}(\pi_{\tau}(\vz_t))) \right]\\
    \end{aligned}
    \end{equation*}

    \begin{equation*}
    \begin{aligned}
        &= 1 + KN + \E \E_{\vz_1, \ldots, \vz_T \sim \cP}\left[ \sum_{\tau=1}^N \sum_{t \in B_{\tau}} \sum_{a_f \in \cA_f} u(\vz_t, \pi^{(\cE)}(\vz_t), a_f) \cdot \mathbbm{1}\{b_{f_t}(\pi^{(\cE)}(\vz_t)) = a_f\} \right. \\ 
        &- \left. u(\vz_t, \pi_{\tau}(\vz_t), a_f) \cdot \mathbbm{1}\{b_{f_t}(\pi_{\tau}(\vz_t)) = a_f\} \right]\\
        &= 1 + KN + \E \sum_{\tau = 1}^N \E_{\vz_1, \ldots, \vz_{(\tau-1) \cdot |B_{\tau-1}|}} \E_{\vz_t : t \in B_{\tau} | \vz_1, \ldots, \vz_{(\tau-1) \cdot |B_{\tau-1}|}} \left[\sum_{t \in B_{\tau}} \sum_{a_f \in \cA_f} \right. \\ 
        &\left. u(\vz_t, \pi^{(\cE)}(\vz_t), a_f) \cdot \mathbbm{1}\{b_{f_t}(\pi^{(\cE)}(\vz_t)) = a_f\} - u(\vz_t, \pi_{\tau}(\vz_t), a_f) \cdot \mathbbm{1}\{b_{f_t}(\pi_{\tau}(\vz_t)) = a_f\} \right]\\
        &= 1 + KN + \E \sum_{\tau = 1}^N \E_{\vz_1, \ldots, \vz_{(\tau-1) \cdot |B_{\tau-1}|} \sim \cP} \E_{\vz_{\tau} \sim \cP | \vz_1, \ldots, \vz_{(\tau-1) \cdot |B_{\tau-1}|}} \left[\sum_{t \in B_{\tau}} \sum_{a_f \in \cA_f} \right.\\ 
        &\left. u(\vz_{\tau}, \pi^{(\cE)}(\vz_{\tau}), a_f) \cdot \mathbbm{1}\{b_{f_t}(\pi^{(\cE)}(\vz_{\tau})) = a_f\} - u(\vz_{\tau}, \pi_{\tau}(\vz_{\tau}), a_f) \cdot \mathbbm{1}\{b_{f_t}(\pi_{\tau}(\vz_{\tau})) = a_f\} \right]\\
        &= 1 + KN + \E \sum_{\tau = 1}^N \E_{\vz_1, \ldots, \vz_{(\tau-1) \cdot |B_{\tau-1}|} \sim \cP} \E_{\vz_{\tau} \sim \cP | \vz_1, \ldots, \vz_{(\tau-1) \cdot |B_{\tau-1}|}} \left[\sum_{a_f \in \cA_f} u(\vz_{\tau}, \pi^{(\cE)}(\vz_{\tau}), a_f) \right. \\ 
        &\cdot \left. \left(\sum_{t \in B_{\tau}} \mathbbm{1}\{b_{f_t}(\pi^{(\cE)}(\vz_{\tau})) = a_f\} \right) - u(\vz_{\tau}, \pi_{\tau}(\vz_{\tau}), a_f) \cdot \left(\sum_{t \in B_{\tau}} \mathbbm{1}\{b_{f_t}(\pi_{\tau}(\vz_{\tau})) = a_f\} \right) \right]\\
    \end{aligned}
    \end{equation*}    
    where the second line follows from~\Cref{lem:cE}, the third from splitting the time horizon into blocks, the fourth from loss due to exploration, the fifth due to reformulating the reward as a function of different follower actions, the sixth due to linearity of expectation, and the seventh line follows from the fact that (1) $\pi_{\tau}$ is independent of $\vz_t$ for all $t \in B_{\tau}$ and (2) $\vz_1, \ldots, \vz_T$ are independent.
    \begin{equation*}
    \begin{aligned}
        \E[R(T)] &\leq 1 + KN + B \E \sum_{\tau = 1}^N \E_{\vz_1, \ldots, \vz_{(\tau-1) \cdot |B_{\tau-1}|} \sim \cP} \E_{\vz_{\tau} \sim \cP | \vz_1, \ldots, \vz_{(\tau-1) \cdot |B_{\tau-1}|}} \left[ \sum_{a_f \in \cA_f} u(\vz_{\tau}, \pi^{(\cE)}(\vz_{\tau}), a_f) \right. \\ 
        &\cdot \left. \left( \frac{1}{|B_{\tau}|} \sum_{t \in B_{\tau}} \mathbbm{1}\{b_{f_t}(\pi^{(\cE)}(\vz_{\tau}) = a_f\} )\right) - u(\vz_{\tau}, \pi_{\tau}(\vz_{\tau}), a_f) \cdot \left(\frac{1}{|B_{\tau}|} \sum_{t \in B_{\tau}} \mathbbm{1}\{b_{f_t}(\pi_{\tau}(\vz_{\tau})) = a_f\} \right) \right]\\
        &= 1 + KN\\ 
        &+ B \E \sum_{\tau = 1}^N \E_{\vz_1, \ldots, \vz_{(\tau-1) \cdot |B_{\tau-1}|} \sim \cP} \E_{\vz_{\tau} \sim \cP | \vz_1, \ldots, \vz_{(\tau-1) \cdot |B_{\tau-1}|}} \left[ \sum_{a_f \in \cA_f} u(\vz_{\tau}, \pi^{(\cE)}(\vz_{\tau}), a_f) \cdot p_{\tau}(\mathbbm{1}_{(\sigma^{(\pi^{(\cE)})} = a_f)}) \right.\\ 
        &- \left. u(\vz_{\tau}, \pi_{\tau}(\vz_{\tau}), a_f) \cdot p_{\tau}(\mathbbm{1}_{(\sigma^{(\pi_{\tau}(\vz_{\tau}))} = a_f)}) \right]\\
        &\leq 1 + KN + B \E \sum_{\tau = 1}^N \E_{\vz_1, \ldots, \vz_{(\tau-1) \cdot |B_{\tau-1}|} \sim \cP} \E_{\vz_{\tau} \sim \cP | \vz_1, \ldots, \vz_{(\tau-1) \cdot |B_{\tau-1}|}}[ u_{\tau}(\pi^{(\cE)}) - u_{\tau}(\pi_{\tau})]\\
        &= 1 + KN + B \cdot \E_{\vz_1, \ldots, \vz_{N} \sim \cP} \left[ \sum_{\tau = 1}^N u_{\tau}(\pi^{(\cE)}) - \E u_{\tau}(\pi_{\tau}) \right]\\
        &\leq 1 + KN + B \cdot \sqrt{N KA_f \log|\Pi|}\\
        &\leq 1 + KN + T K \cdot \sqrt{\frac{A_f \log(T)}{N}}\\
    \end{aligned}
    \end{equation*}
    where the first line comes from multiplying and dividing by $|B_{\tau}|$, the second line comes from the definition of $p_{\tau}$, the third from the definition of $u_{\tau}$, the fourth follows from linearity of expectation and the fact that $\vz_1, \ldots, \vz_T$ are i.i.d., the fifth follows from applying~\Cref{lem:bandit-to-full}, and the sixth line follows from the definition of $B$ and the fact that $|\Pi| \leq N^K$. 
    Setting $N$ gets us the final result. 
\end{proof}

\end{document}